\newcommand{\T}{{\scriptscriptstyle\mathsf{T}}}
\renewcommand{\H}{{\scriptscriptstyle\mathsf{H}}}
\newsavebox{\foobox}
\definecolor{kugray5}{RGB}{224,224,224}
\newcommand\rsout{\bgroup\markoverwith
	{\textcolor{red}{\rule[0.5ex]{2pt}{0.8pt}}}\ULon}
\newcommand{\ALOOP}[1]{\ALC@it\algorithmicloop\ #1%
	\begin{ALC@loop}}
	\newcommand{\ENDALOOP}{\end{ALC@loop}\ALC@it\algorithmicendloop}
\renewcommand{\algorithmicrequire}{\textbf{Input:}}
\let\mybibitem\bibitem
\renewcommand{\bibitem}[1]{%
	\ifstrequal{#1}{nature}
	{\color{blue}\mybibitem{#1}}
	{\color{black}\mybibitem{#1}}%
}
\newtheorem{theorem}{\textbf{Theorem}}
\newtheorem{remark}{\textbf{Remark}}
\newtheorem{lemma}{\textbf{Lemma}}
\newtheorem{proof}{Proof}
\newcommand{\epr}{\hfill\(\Box\)}
\newcommand\nbthis{\addtocounter{equation}{1}\tag{\theequation}}
\newcommand{\norm}[1]{\left\lVert#1\right\rVert} 
\newcommand{\normshort}[1]{\lVert#1\rVert} 
\newcommand{\abs}[1]{\left|#1\right|} 
\newcommand{\absshort}[1]{|#1|} 
\newcommand{\tr}[1]{\mathtt{trace}\left(#1\right)} 
\newcommand{\trshort}[1]{\mathtt{trace}(#1)} 
\newcommand{\diag}[1]{\mathtt{diag}\left\{#1\right\}} 
\newcommand{\re}[1]{\mathfrak{R}{\left(#1\right)}}
\newcommand{\im}[1]{\mathfrak{I}{\left(#1\right)}}
\newcommand{\mean}[1]{\mathbb{E} \left\{#1\right\}}
\newcommand{\meanshort}[1]{\mathbb{E} \{#1\}}
\newcommand{\mR}{{\mathbf{R}}}
\newcommand{\mH}{{\mathbf{H}}} 
\newcommand{\mA}{{\mathbf{A}}}
\newcommand{\mS}{{\mathbf{S}}}
\newcommand{\mW}{{\mathbf{W}}}
\newcommand{\mI}{\textbf{\textbf{I}}}
\newcommand{\mX}{{\mathbf{X}}}
\newcommand{\mY}{{\mathbf{Y}}}
\newcommand{\mG}{{\mathbf{G}}}
\newcommand{\mF}{{\mathbf{F}}}
\newcommand{\mU}{{\mathbf{U}}}
\newcommand{\mV}{{\mathbf{V}}}
\newcommand{\mZ}{{\mathbf{Z}}}
\newcommand{\mN}{{\mathbf{N}}}
\newcommand{\mJ}{{\mathbf{J}}}
\newcommand{\setC}{\mathbb{C}} 
\newcommand{\setR}{\mathbb{R}}
\newcommand{\ve}{{\mathbf{e}}} 
\newcommand{\vs}{{\mathbf{s}}}
\newcommand{\vx}{{\mathbf{x}}}
\newcommand{\vy}{{\mathbf{y}}}
\newcommand{\vv}{{\mathbf{v}}}
\newcommand{\vn}{{\mathbf{n}}}
\newcommand{\vu}{{\mathbf{u}}}
\newcommand{\vz}{{\mathbf{z}}} 
\newcommand{\vh}{{\mathbf{h}}}
\newcommand{\vb}{{\mathbf{b}}}
\newcommand{\vw}{{\mathbf{w}}}
\newcommand{\va}{{\mathbf{a}}}
\newcommand{\vp}{{\mathbf{p}}}
\newcommand{\vi}{{\mathbf{i}}}
\newcommand{\vj}{{\mathbf{j}}}
\newcommand{\vf}{\mathbf{f}}
\def\argmin{\mathop{\mathtt{argmin}}}
\def\bGamma{{\pmb{\Gamma}}}
\def\b0{{\pmb{0}}}
\newcommand{\Nr}{N_\mathtt{r}}
\newcommand{\Nt}{N_\mathtt{t}}
\newcommand{\Nrh}{N_\mathtt{rh}}
\newcommand{\Nrv}{N_\mathtt{rv}}
\newcommand{\Nth}{N_\mathtt{th}}
\newcommand{\Ntv}{N_\mathtt{tv}}
\newcommand{\ah}{\va_\mathtt{h}} 
\newcommand{\av}{\va_\mathtt{v}}
\newcommand{\sigmac}{\sigma_{\mathtt{c}}^2}
\newcommand{\sigmas}{\sigma_{\mathtt{s}}^2}
\newcommand{\noise}{\sigma^2}
\newcommand{\Pt}{P_{\mathtt{t}}}
\newcommand{\Jtt}{J_{\theta \theta}}
\newcommand{\Jtp}{J_{\theta \phi}}
\newcommand{\Jta}{\vj_{\theta \bar{\bm{\alpha}}}}
\newcommand{\Jpp}{J_{\phi \phi}}
\newcommand{\Jpa}{\vj_{\phi \bar{\bm{\alpha}}}}
\newcommand{\Jaa}{\mJ_{\bar{\bm{\alpha}}\bar{\bm{\alpha}}}}
\newcommand{\Jpat}{\tilde{J}_{\phi \bar{\bm{\alpha}}}}
\newcommand{\xibf}{\bm{\xi}_{\mathtt{bf}}}
\newcommand{\lambdabf}{{\lambda_{\mathtt{bf}}}} 
\newcommand{\zetabf}{{\bm{\zeta}_{\mathtt{bf}}}}
\newcommand{\adotphi}{\dot{\va}_{\phi}}
\newcommand{\adottheta}{\dot{\va}_{\theta}}
\newcommand{\bdotphi}{\dot{\vb}_{\phi}}
\newcommand{\bdottheta}{\dot{\vb}_{\theta}}
\begin{document}
        \setlength{\abovedisplayskip}{4.5pt}
	\setlength{\belowdisplayskip}{4.5pt}
	\title{Performance Analysis and Power Allocation for Massive MIMO ISAC Systems}
	\author{Nhan~Thanh~Nguyen, \IEEEmembership{Member, IEEE},
		Van-Dinh~Nguyen, \IEEEmembership{Senior Member, IEEE},
		Hieu~V.~Nguyen, \IEEEmembership{Member, IEEE},\\
		Hien~Quoc~Ngo, \IEEEmembership{Fellow, IEEE},
		A.~Lee~Swindlehurst, \IEEEmembership{Fellow, IEEE},
		and Markku Juntti, \IEEEmembership{Fellow, IEEE}
        \thanks{This work was supported in part by the Research Council of Finland through the 6G Flagship program (grant number 369116), project DIRECTION (grant number 354901), and project FunISAC (grant 359093), by CHIST-ERA through the project PASSIONATE  (grant number 359817), by Business Finland, Keysight, MediaTek, Siemens, Ekahau, and Verkotan via the 6GLearn project, and by HORIZON-JU-SNS-2023 project INSTINCT. An earlier version of this paper was presented in part at the IEEE Radar Conference, 2024.} \thanks{Nhan Thanh Nguyen and Markku Juntti are with Centre for Wireless Communications, University of Oulu, P.O.Box 4500, FI-90014, Finland (e-mail:\{nhan.nguyen, markku.juntti\}@oulu.fi). Van-Dinh Nguyen is with the College of Engineering and Computer Science and also with the Center for Environmental Intelligence, VinUniversity, Vinhomes Ocean Park, Hanoi 100000, Vietnam (e-mail: dinh.nv2@vinuni.edu.vn). Hieu Van Nguyen is with the Faculty of Electronic and Telecommunication Engineering, The University of Danang, University of Science and Technology, Da Nang 50000, Vietnam (email: nvhieu@dut.udn.vn). Hien Quoc Ngo is with the School of Electronics, Electrical Engineering and Computer Science, Queen’s University Belfast, Belfast BT7 1NN, United Kingdom (email: hien.ngo@qub.ac.uk). A. Lee Swindlehurst is with the Center for Pervasive Communications and Computing, University of California, Irvine, CA 92697, US (email: swindle@uci.edu).}
	}
	\maketitle
	
	\begin{abstract}
		Integrated sensing and communications (ISAC) is envisioned as a key feature in future wireless communications networks. Its integration with massive multiple-input-multiple-output (MIMO) techniques promises to leverage substantial spatial beamforming gains for both functionalities. In this work, we consider a massive MIMO-ISAC system employing a uniform planar array with zero-forcing and maximum-ratio downlink transmission schemes combined with monostatic radar-type sensing. Our focus lies on deriving closed form expressions for the achievable communications rate and the Cramér--Rao lower bound (CRLB), which serve as performance metrics for communications and sensing operations, respectively. The expressions enable us to investigate important operational characteristics of massive MIMO-ISAC, including the mutual effects of communications and sensing as well as the advantages stemming from using a very large antenna array for each functionality. Furthermore, we devise a power allocation strategy based on successive convex approximation to maximize the communications rate while guaranteeing the CRLB constraints and transmit power budget. Extensive numerical results are presented to validate our theoretical analyses and demonstrate the efficiency of the proposed power allocation approach.
	\end{abstract}
	
	\begin{IEEEkeywords}
		Integrated sensing and communications (ISAC), massive MIMO, maximum-ratio transmission, zero-forcing.
	\end{IEEEkeywords}
	\IEEEpeerreviewmaketitle
	
	\section{Introduction}
	Future wireless communications technologies such as evolving 6G systems will be required to meet increasingly demanding objectives. These objectives encompass supporting high-throughput, low-latency communications while ensuring high energy efficiency. Additionally, 6G is anticipated to offer sensing and cognition capabilities~\cite{giordani2020toward}, enabled by massive MIMO transceivers designed for dual communications and radar purposes~\cite{zhang2021overview}. This emerging concept of unifying communications and sensing is commonly referred to as dual-functional radar-communications, joint communications and sensing~~\cite{zhang2018multibeam}, or integrated sensing and communications (ISAC)~\cite{ouyang2022performance, liu2022integrated}. For consistency, we adopt the term ISAC throughout this work. Our focus lies in transceiver designs for joint design of communications and radar sensing operations in massive multiple-input-multiple-output (MIMO) systems, wherein a substantial number of antennas are deployed at the base station (BS).

	\subsection{Related Works}
	Various forms of ISAC systems have been proposed in the literature, categorized based on their design focus into radar-centric, communications-centric, and joint designs ~\cite{ma2020joint,zhang2021overview}. The radar-centric approach builds upon existing radar technologies, extending their functionality to incorporate communications capabilities. This is often realized by integrating digital messages into radar waveforms via index modulation~\cite{huang2020majorcom, ma2021spatial, ma2021frac} or by modulating the radar side lobes~\cite{Hassanien2016Dual}. Communications-centric approaches, on the other hand, usually employ conventional communications signals/waveforms for probing the environment~\cite{Kumari2018IEEE80211ad}, though typically with restricted sensing functionalities. Joint ISAC designs are our focus in this paper, which optimize the transceiver configuration to strike a balance between communications and sensing.

	Recent literature has seen an increasing emphasis on transmit beamforming designs for ISAC systems~\cite{liu2018mu, liu2020joint, johnston2022mimo, liu2022joint, liu2022transmit, pritzker2022transmit, wang2023transmit}. Various design metrics were investigated for the newly integrated sensing function, such as the beampattern~\cite{liu2018mu, liu2020joint, johnston2022mimo, wang2023transmit}, signal-to-cluster-plus-noise ratio (SCNR)~\cite{choi2024joint, zhang2023isac, wang2023optimizing}, Cramér--Rao lower bound (CRLB)~\cite{liu2021cramer, song2023intelligent, zhu2023cramer, ren2022fundamental, song2023cram, song2023cramer, wang2021joint}, among others~\cite{johnston2022mimo}. Specifically, in~\cite{liu2018mu, liu2020joint, johnston2022mimo}, the transmit beamformers were designed to minimize the sensing beampattern mismatch constrained by the communications signal-to-interference-plus-noise ratio (SINR) and transmit power budget. The works~\cite{nguyen2023multiuser, nguyen2023joint} aimed at maximizing the communications throughput while constraining the beampattern error. In the designs in~\cite{choi2024joint, zhang2023isac}, the SCNR is cast as a beamforming design constraint, while both the SINRs and SCNR were considered in a weighted design objective in~\cite{wang2023optimizing} to enhance the fairness between communications users and the sensing target.
	
	The ISAC designs relying on the beampattern or SCNRs ensure that the sensing target(s) are covered within the main lobes of the radar transmitter. However, these metrics do not directly address the processing of radar echo signals for detection or estimation. In contrast, the CRLB serves as a reliable lower bound for the target's parameter estimation accuracy. It was thus employed as the sensing metric in~\cite{liu2021cramer, song2023intelligent, zhu2023cramer, ren2022fundamental, song2023cram, song2023cramer, wang2021joint} for transmit beamforming designs for point and/or extended target scenarios. Note that due to the typical form of the CRLB, the constraints and objectives related to it are often converted into semidefinite forms using the Schur complement. This transformation enables them to be addressed via semi-definite relaxation (SDR)~\cite{liu2021cramer, song2023cram, song2023cramer, zhu2023cramer, song2023intelligent}. 
	
	Most of the aforementioned works investigated ISAC operations in MIMO scenarios~\cite{li2016optimum, liu2018mu, liu2017robust, liu2022transmit}. However, conventional small-sized antenna arrays may not ensure high spatial beamforming gains. In contrast, massive MIMO technology with very large arrays offers superior spectral and energy efficiency (SE/EE) for communications systems~\cite{marzetta2010noncooperative, larsson2014massive, ngo2013energy, fortunati2023fundamental}. It has been shown in~\cite{buzzi2019using} that by leveraging a massive MIMO radar BS, communications and radar systems can coexist with little mutual interference. Temiz \textit{et al.}~\cite{temiz2021dual} proposed a joint uplink massive MIMO communications and orthogonal frequency-division multiplexing (OFDM) radar sensing architecture, using zero-forcing (ZF) and ordered successive interference cancellation receivers to eliminate the inter-user and radar interference during communications symbol detection. In~\cite{qi2022hybrid, wang2022partially, liyanaarachchi2021joint, barneto2021beamformer}, the radar performance of massive MIMO systems is optimized under constraints on communications performance. On the other hand, the communications rate of massive MIMO-ISAC systems is maximized in~\cite{nguyen2023jointssp, nguyen2023joint, nguyen2023multiuser} subject to constraints on the radar performance. In \cite{gao2022integrated, zhang2024integrated}, efficient channel and target parameter estimation approaches were developed for millimeter-wave massive MIMO ISAC systems, relying on compressed sampling \cite{gao2022integrated} and tensor frameworks \cite{zhang2024integrated}. {Liao \textit{et al.}\cite{liao2023power} derived closed form expressions for the communications SINR and the sensing mainlobe-to-average-sidelobe ratio (MASR). They also proposed a power allocation algorithm to minimize the total transmit power while satisfying predefined SINR and MASR thresholds. Topal \textit{et al.}\cite{topal2024multi} focused on a multi-target sensing scenario and introduced a beamforming scheme aimed at minimizing the total CRLBs for direction-of-arrival (DoA) estimates of all targets.}
	
	\subsection{Motivation and Contributions}
	
	In the aforementioned works on massive MIMO ISAC and/or CRLB optimizations, system designs are typically performed at the rate that the small-scale fading changes. Furthermore, most existing CRLB-based ISAC designs rely on the highly complex SDR method and focus on estimation of only a single angle with uniform linear arrays (ULAs). However, these approaches may not be suitable for massive MIMO ISAC scenarios where very large uniform planar arrays (UPAs) are often deployed in practice~\cite{lu2020omnidirectional}. Owing to its many degrees of freedom, massive MIMO can provide good beamforming gains with simple linear beamforming methods such as maximum-ratio transmission (MRT) and ZF and additional power allocation over long time intervals~\cite{ngo2013massive, liao2023power}. 
	
	In this paper, we consider a mono-static downlink multiuser massive MIMO-ISAC system employing a large UPA transmitter and MRT or ZF precoders. {For the sensing function, we focus on the tracking mode and assume that the searching phase has been completed, i.e., the presence of a target successfully detected \cite{liu2021cramer, song2023intelligent}.} To characterize the joint communications and sensing operations of the system, we investigate the sum rate for communications and the CRLB for sensing. We note that existing results on the CRLB and techniques for addressing the CRLB constraints are not directly applicable to our design. This is because we assume a UPA and perform power allocations based on the large-scale fading parameters. Our main contributions are as follows:
	\begin{itemize}
		\item We derive closed form expressions for the achievable rate and the CRLB for the target's azimuth and elevation angles. These metrics are expressed as functions of the large-scale fading parameters and are unified for both the MRT and ZF schemes. The approach significantly simplifies the subsequent power allocation procedure.
		\item Based on the closed form rate and CRLB expressions, we show that the sensing objective increases the beamforming uncertainty and the inter-user interference of the communications channels for both MRT and ZF. However, these drawbacks can be mitigated by deploying a very large number of antennas. We also show that the sensing performance can be improved by increasing the number of antennas.
		\item We perform power allocation to maximize the communications rate while constraining the CRLB. The formulated problem is nonconvex, but it can be solved with low complexity in our proposed algorithm by leveraging the successive convex approximation (SCA) approach. We also provide an efficient initial solution for faster convergence.
		\item Finally, we present numerical results to verify our theoretical findings and demonstrate the performance of the power allocation scheme.
	\end{itemize}

{We summarize the contributions and novelties of this paper compared to existing studies on massive MIMO ISAC systems in Table \ref{tab_comparison}. While SINR and achievable rate are commonly used to evaluate communications performance, the sensing performance metrics vary across the compared studies. Notably, works such as \cite{liao2023power, buzzi2019using, temiz2021dual} provide closed form expressions for the communications SINR and achievable rate. However, most of these studies evaluate sensing performance using instantaneous metrics, requiring beamforming designs that adapt to the rapid changes of small-scale fading channels. In contrast, our study is unique in deriving closed form CRLBs for the estimation of both azimuth and elevation angles. Furthermore, although closed form achievable rate expressions have been derived for massive MIMO communications-only systems in works such as \cite{ngo2013energy, xin2015area, pirzadeh2018spectral, liu2016energy, zhang2016spectral, kamga2016spectral, bjornson2017massive, van2017massive}, these results are not applicable to massive MIMO ISAC systems because ISAC waveforms differ fundamentally from those used in communications-only systems.}

\begin{table*}[h!]\small
\centering
\caption{{Comparison of contributions and novelties of closely related works on massive MIMO ISAC}}
\label{tab_comparison}
\resizebox{\textwidth}{!}{%
\begin{tabular}{|m{0.75cm}|m{2cm}|m{3cm}|m{11cm}|}
\hline
\multicolumn{1}{|c|}{\textbf{Ref.}} & 
\multicolumn{1}{c|}{\textbf{Comm. metric}} & 
\multicolumn{1}{c|}{\textbf{Sensing metric}} & 
\multicolumn{1}{c|}{\textbf{Main contributions}} \\ 
\hline 
\hline

\cite{buzzi2019using} & closed form achievable rate & Instantaneous probability of detection & \begin{itemize}[label=\textbullet, nolistsep, leftmargin=5pt,
            before*={\mbox{}\vspace{-\baselineskip}}, after*={\mbox{}\vspace{-\baselineskip}}]
    \item Propose using massive array at the BS to perform downlink communications and sensing/surveillance of the surrounding environment through radar scanning.
    \item Show that communications and radar scanning functions can coexist with minimal interference thanks to the use of large antenna arrays.
\end{itemize}  \\ 
\hline

\cite{temiz2021dual} & closed form achievable rate & Instantaneous radar channel estimation mean squared error and radar image SINR & \begin{itemize}[label=\textbullet, nolistsep, leftmargin=5pt,
            before*={\mbox{}\vspace{-\baselineskip}}, after*={\mbox{}\vspace{-\baselineskip}}]
    \item Propose a dual-function radar-communications system to communicate with multiple uplink UEs while sensing targets in range using the same OFDM subcarriers.
    \item Derive expression for closed form communications rate with perfect and imperfect CSI.
    \item Evaluate radar detection and channel estimation accuracy.
\end{itemize}  \\ 
\hline

\cite{liao2023power} & closed form SINR & closed form MASR & \begin{itemize}[label=\textbullet, nolistsep, leftmargin=5pt,
            before*={\mbox{}\vspace{-\baselineskip}}, after*={\mbox{}\vspace{-\baselineskip}}]
    \item Derive closed form expressions for SINR and MASR for communications and sensing performance with MRT and ZF precoding.
    \item Calculate power allocation to minimize total transmit power under communications SINR and sensing MASR constraints.
\end{itemize}  \\ 
\hline

\cite{topal2024multi} & Instantaneous SINR & Instantaneous CRLBs for target DoA estimates & \begin{itemize}[label=\textbullet, nolistsep, leftmargin=5pt,
            before*={\mbox{}\vspace{-\baselineskip}}, after*={\mbox{}\vspace{-\baselineskip}}]
    \item Minimize the CRLB trace for target DoA estimates subject to communications SINR and power constraints, employ regularized ZF precoding for sensing.
    \item Consider multi-target sensing scenario.
\end{itemize}  \\ 
\hline

This work & closed form achievable rate & closed form CRLBs for target azimuth and elevation angles estimates & \begin{itemize}[label=\textbullet, nolistsep, leftmargin=5pt,
            before*={\mbox{}\vspace{-\baselineskip}}, after*={\mbox{}\vspace{-\baselineskip}}]
    \item Derive a unified closed form expression for communications rate for both MRT and ZF precoding  \textit{(not found in \cite{topal2024multi})}.
    \item Derive closed form sensing CRLBs for target azimuth and elevation angle estimates \textit{(not found in \cite{liao2023power, temiz2021dual, buzzi2019using, topal2024multi})}.
    \item (i) Show that sensing degrades communications rate by increasing beamforming uncertainty and inter-user interference, (ii) demonstrate how this effect can be mitigated by deploying a large number of antennas, and (iii) CRLBs approach zero when the number of antennas goes to infinity \textit{(not found in \cite{liao2023power, temiz2021dual, buzzi2019using, topal2024multi})}.
    \item Derive power allocation to maximize communications sum rate subject to sensing CRLB and total power constraints \textit{(different from \cite{liao2023power, temiz2021dual, buzzi2019using, topal2024multi})}.
\end{itemize}  \\ 
\hline
\end{tabular}%
}
\end{table*}

	\subsection{Paper Organization and Notation}
	The rest of the paper is organized as follows. In Section \ref{sec_system_model}, we present the communications and sensing signal models and the general linear beamformers. In Section \ref{sec_perf_analysis}, we derive closed form expressions for the system communications rate and sensing CRLB, followed by important operational characteristics of the considered massive MIMO ISAC system. Section \ref{sec_opt} details the proposed power allocation method. Numerical results are given and discussed in Section \ref{sec_sim}. Finally, Section \ref{sec_conclusion} concludes the paper. 
	
	Throughout the paper, scalars, vectors, and matrices are denoted by lower-case, boldface lower-case, and boldface upper-case letters, respectively; $(\cdot)^*$, $(\cdot)^\T$, $(\cdot)^\H$, and $\tr{\cdot}$ denote the conjugate, the transpose,  the conjugate transpose, and the trace operators, respectively; $\abs{\cdot}$ and $\norm{\cdot}$  respectively denote the modulus of a complex number and the Euclidean norm of a vector. The notation $\dot{\va}_o$ represents the derivative of $\va$ with respect to $o$, i.e., $\dot{\va}_o = \frac{\partial \va}{\partial o}$, $\mathcal{CN}(\mu, \sigma^2)$ denotes a complex normal distribution with mean $\mathbf{\mu}$ and variance $\sigma^2$, and $\mean{\cdot}$ denotes the expected value of a random variable. 
	
	\section{Signal Model}
	\label{sec_system_model}

	We consider mono-static massive MIMO ISAC downlink data transmission and passive target sensing, including a BS, $K$ single-antenna communications users, and a sensed target. We assume that the BS is equipped with a UPA consisting of $\Nt$ transmit and $\Nr$ receive antennas, with $\Nt \gg 1$ and $\Nr \gg 1$. {At the BS, the $\Nt$ antennas simultaneously transmit data signals to the users and probing signals to the target located at a specific angle of interest. This angle is assumed to have been determined during the radar scanning or searching phase \cite{liu2021cramer, song2023cramer, song2023intelligent, zhu2023cramer, song2023cram}.} The echo from the sensed target is then processed by the $\Nr$ receive antennas at the BS.
	
	\subsection{Communications Model}
	\subsubsection{Communications Signal Model}
	\begin{figure}[t]
		\centering
		\includegraphics[scale=0.14]{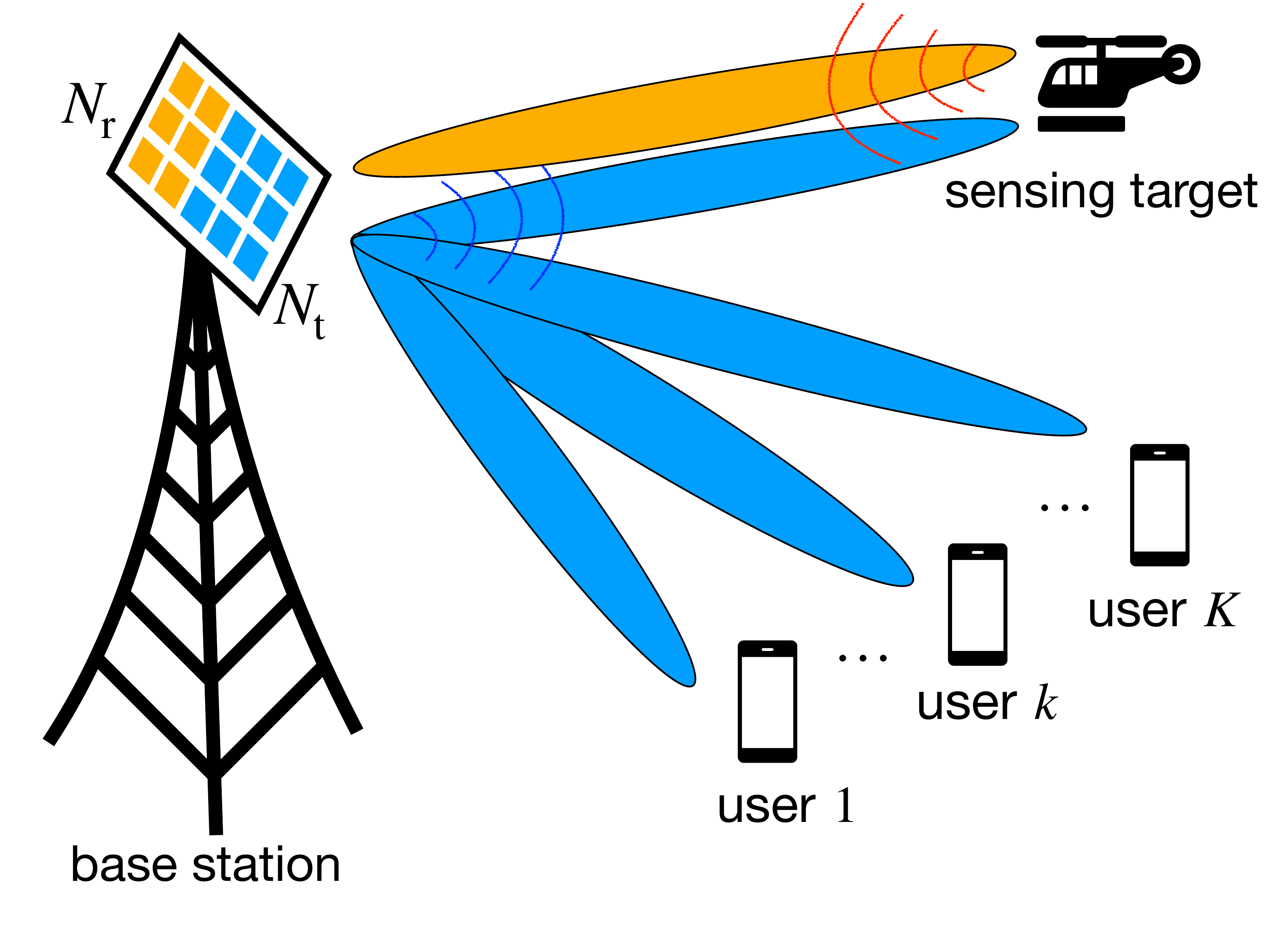}
		\caption{Mono-static massive MIMO ISAC system with a UPA at the BS.}
		\label{fig_system}
	\end{figure}
	
	Denote the transmit vector from the BS at the $\ell$-th time slot by $\vs_\ell = [s_{1\ell}, \ldots, s_{k\ell}, \ldots, s_{K\ell}] \in {\mathbb{C}}^{K \times 1}$, where $\mean{\vs_\ell \vs_\ell^{\H}}=\mI_K$ and $s_{k\ell}$ is the signal intended for the $k$-th user. Furthermore, let $\mS = [\vs_1, \ldots, \vs_L] \in \setC^{K \times L}$ be the transmit symbol matrix, where $L \gg 1$ is the length of the radar/communications frame. The data streams are assumed to be independent of each other such that
	\begin{align*}
		\mS \mS^\H \approx L \mI_K, \nbthis \label{eq_symbol_matrix}
	\end{align*}
	which becomes very tight for independent $\{\vs_1, \ldots, \vs_L\}$ and sufficiently large $L$ ~\cite{liu2021cramer} due to the law of large numbers. 
	
	The BS employs the linear precoder
	\begin{align*}
		\mF = \mW \bm{\Gamma} + \vv \bar{\bm{\eta}}^\T \in \mathbb{C}^{\Nt \times K}, \nbthis \label{eq_F}
	\end{align*}
	where $\mW = [\vw_1,\ldots,\vw_K] \in \mathbb{C}^{\Nt \times K}$ and $\bm{\Gamma} = \diag{\sqrt{\gamma_1}, \ldots, \sqrt{\gamma_K}} \in \setC^{K \times K}$ are the matrix of precoding vectors and power allocation factors, respectively, for the communications users. Specifically, $\vw_k$ and $\gamma_k$ are the precoding vector and power allocated for the $k$-th communications user. The vectors $\vv \in \mathbb{C}^{\Nt \times 1}$ and $\bar{\bm{\eta}} = [\sqrt{\eta_1},\ldots,\sqrt{\eta_K}]^\T \in \setC^{K \times 1}$ represent the precoding vector and power fraction allocated for sensing in each data stream, respectively. The $k$-th column of $\mF$, denoted as $\vf_k$, represents the dual-functional precoding vector for user $k$ and is given as $\vf_k = \sqrt{\gamma_k} \vw_k + \sqrt{\eta_k} \vv$.	Then, the $\Nt \times 1$ transmit signal vector during the $\ell$-th time slot is $\vx_\ell = \mF \vs_\ell = \sum_{k=1}^{K} \vf_k s_{k \ell}$, and the overall dual-functional transmit waveform is denoted by $\mX = [\vx_1, \ldots, \vx_L] \in \setC^{\Nt \times L}$. Equivalently, we have $\mX = \mF \mS$.
	
	For transmit waveform $\mX$, the $K \times L$ combined signal matrix received by the users can be expressed as
	\begin{align*}
		\mY_{\mathtt{c}} = \mH^\H \mX + \mN_{\mathtt{c}} = \mH^\H \mF \mS + \mN_{\mathtt{c}},\nbthis \label{eq_comm_model}
	\end{align*}
	where $\mN_{\mathtt{c}} \in \setC^{K \times L}$ is additive white Gaussian noise (AWGN) with independent entries following the distribution $\mathcal{CN}(0, \sigmac)$, {with $\sigmac$ denoting the noise variance at the communications user receivers}. Furthermore, $\mH = [\vh_1,\ldots,\vh_k,\ldots,\vh_K] \in \mathbb{C}^{\Nt \times K}$ is the channel matrix from the BS to the $K$ users. Here, $\vh_k$ denotes the channel between the BS and the $k$-th user, given as~\cite{mollen2016uplink}
	\begin{align}
		\vh_k = \beta_k^{1/2} \bar{\vh}_k, \label{eq_channel_model}
	\end{align} 
	where $\beta_k$ and $\bar{\vh}_k\sim \mathcal{CN}(0, \mI_{\Nt})$ represent the large-scale fading parameter and the small-scale Rayleigh fading channels, respectively. 
    At the $\ell$-th time slot, the received signal at user $k$ is
	\begin{align*}
		{y_{\mathtt{c}}}_{k\ell} = \vh_k^\H \vf_k s_{k\ell} + \vh_k^\H \sum\nolimits_{j \neq k} \vf_j s_{j \ell} + {n_{\mathtt{c}}}_{k\ell}. \nbthis \label{eq_y_kl}
	\end{align*}
	The linear precoding matrix $\mF$ is computed using the channel estimates acquired during the uplink training phase \cite{ngo2013massive}.  Let $\hat{\vh}_k$ and $\ve_k$ denote the channel estimate and the corresponding estimation error, such that $\vh_k = \hat{\vh}_k + \ve_k$. Assuming minimum mean square error (MMSE) estimation, $\hat{\vh}_k$ and $\ve_k$ are independent. In addition,  $\hat{\vh}_k \sim \mathcal{CN}(0, \xi_k \mI_{\Nt})$, and $\ve_k \sim \mathcal{CN}(0, \epsilon_k \mI_{\Nt})$, where~\cite{ngo2013massive}
	\begin{align*}
		\xi_k &= \frac{\tau_{\mathtt{p}} p_{\mathtt{p}} \beta_k^2}{\tau_{\mathtt{p}} p_{\mathtt{p}} \sum_{j=1}^{K} \beta_j  \abs{\vp_j^\H \vp_k}^2 + \sigmac}, \nbthis \label{eq_xi_k}\\
		\epsilon_k = \beta_k - \xi_k &= \frac{\beta_k (\tau_{\mathtt{p}} p_{\mathtt{p}} \sum_{j\neq k} \beta_j  \abs{\vp_j^\H \vp_k}^2 + \sigmac )}{\tau_{\mathtt{p}} p_{\mathtt{p}} \sum_{j=1}^{K} \beta_j  \abs{\vp_j^\H \vp_k}^2 + \sigmac}, \nbthis \label{eq_epsilon_k}
	\end{align*} 
        and where $\vp_k \in \setC^{\tau_{\mathtt{p}} \times 1}$, with $\|\vp_k\|^2=1$, is the pilot sequence transmitted by the $k$th user, $\tau_\mathtt{p}$ is the length of the pilot sequences, and $p_{\mathtt{p}}$ is the average power of the training symbols. The length of the training sequences must be shorter than the coherence interval $\tau_{\mathtt{c}}$.

	\subsection{Radar Model}
	While transmitting communications signals to the users, the BS also receives echo signals from the target. The received discrete-time radar sensing signal is given as \cite{liu2021cramer, song2023intelligent}
	\begin{align*}
		\mY_{\mathtt{s}} &= \alpha \mG(\theta, \phi) \mX + \mN_{\mathtt{s}} \in \setC^{\Nr \times L}, \nbthis \label{eq_radar_model}
	\end{align*}
	where $\alpha$ is the reflection coefficient, $\mN_{\mathtt{s}} \in \setC^{\Nr \times L}$ is an AWGN matrix with entries distributed as $\mathcal{CN}(0, \sigmas)$, {with $\sigmas$ denoting the noise variance at the sensing receiver}. Furthermore, $\mG(\theta, \phi) \in \setC^{\Nr \times \Nt}$ is the two-way channel in the desired sensing directions, modeled as~\cite{johnston2022mimo, liu2020joint}:
	\begin{align}
		\mG(\theta, \phi) &= \vb(\theta, \phi) \va^\H(\theta, \phi), \label{eq_A}
	\end{align} 
	where $\theta$ and $\phi$ are the azimuth and elevation angles of the target relative to the BS, {$\theta \in [-\pi,\pi]$, $\phi \in \left[-\frac{\pi}{2}, \frac{\pi}{2}\right]$}, and $\va(\theta, \phi)$ and $\vb(\theta, \phi)$ are the transmit and receive steering vectors at the BS, respectively. In the following analysis, we drop $(\theta, \phi)$ for ease of exposition. Let  $\Nth \times \Ntv$ be the size of the UPA, where $\Nth$ and $\Ntv$ are the numbers of antennas in the horizontal and vertical dimensions and $\Nth \Ntv = \Nt$. To facilitate the analysis, we choose the center of the UPA as the reference point~\cite{liu2021cramer, song2023intelligent, bekkerman2006target} and assume half-wavelength antenna spacing. As a result, $\va$ is modeled as~\cite{nguyen2022beam, yu2016alternating}
	\begin{align*}
		\va = \ah \otimes \av, \nbthis \label{eq_at}
	\end{align*} 
	where
	\begin{align*}
		&\ah = \Big[ e^{-j\pi\frac{\Nth-1}{2} \sin(\theta) \sin(\phi)}, e^{-j\pi\frac{\Nth-3}{2} \sin(\theta) \sin(\phi)}, \\
		&\qquad \ldots, e^{j\pi\frac{\Nth-3}{2} \sin(\theta) \sin(\phi)}, e^{j\pi\frac{\Nth-1}{2} \sin(\theta) \sin(\phi)} \Big]^\T, \nbthis \label{eq_ath} \\
		&\av = \Big[ e^{-j\pi\frac{\Ntv-1}{2} \cos(\phi)}, e^{-j\pi\frac{\Ntv-3}{2} \cos(\phi)}, \\
		&\hspace{2cm}  \ldots, e^{j\pi\frac{\Ntv-3}{2} \cos(\phi)}, e^{j\pi\frac{\Ntv-1}{2} \cos(\phi)} \Big]^\T, \nbthis \label{eq_atv}
	\end{align*}
	are the array response vectors corresponding to the horizontal and vertical dimensions of the UPA, respectively. The receive response vector $\vb$ for the $\Nr$ receive antennas at the BS is modeled similarly.

\subsection{Linear ISAC Beamforming}
	From \eqref{eq_F}, \eqref{eq_A}, and based on $\mX = \mF \mS$, we rewrite \eqref{eq_radar_model} as
	\begin{align*}
		\!\mY_{\mathtt{s}} = \alpha \vb \va^\H  \left(\mW \bm{\Gamma} + \vv \bar{\bm{\eta}}^\T\right) \mS + \mN_{\mathtt{s}} = \alpha \vb \va^\H \vv \bar{\bm{\eta}}^\T  \mS + \tilde{\mN}_{\mathtt{s}}, \nbthis \label{eq_radar_2}
	\end{align*}
	where $\tilde{\mN}_{\mathtt{s}} \triangleq \alpha \vb \va^\H \mW \bm{\Gamma} \mS + \mN_{\mathtt{s}}$. {For an arbitrary $\mW \bm{\Gamma}$, the optimal sensing beamformer would be $\vv = \va(\theta,\phi)$ if the target angles ($\theta, \phi$) were known. To maintain generality in the subsequent derivations, here we assume that $\vv=\av(\tilde{\theta},\tilde{\phi})$ for arbitrary $\tilde{\theta}$ and $\tilde{\phi}$, so that the performance can be evaluated for the more general case where these angles are unknown.}
    
	
	For communications, the MRT and ZF beamformers are considered, which are given by
	\begin{align*}
		\mW = 
		\begin{cases*}
			\hat{\mH} \triangleq \mW_{\mathtt{MRT}}, &\text{for MRT} \\
			\hat{\mH} (\hat{\mH}^\H \hat{\mH})^{-1} \triangleq \mW_{\mathtt{ZF}}  &\text{for ZF}.
		\end{cases*} \nbthis \label{eq_W_linear}
	\end{align*}
	The overall MRT and ZF beamforming matrices become
	\begin{align*}
		\mF_{\mathtt{bf}} &=  \mW_{\mathtt{bf}}  \bGamma + {\vv} \bar{\bm{\eta}}^\T, \nbthis \label{eq_F_MRT} 
	\end{align*}
	where $\mathtt{bf} \in \{\mathtt{MRT}, \mathtt{ZF}\}$. To facilitate the subsequent analysis and design, we first derive the total transmit power of the MRT and ZF precoders in the following lemma.
	
	\begin{lemma}
		\label{lemma_power}
		The total transmit power of the BS employing the MRT and ZF precoders is given as
		\begin{align*}
			P_{\mathtt{bf}} &= \Nt \xibf^\T \bm{\gamma} + \Nt \rho, \nbthis \label{eq_power}
		\end{align*}
		where  $\rho = \norm{\bar{\bm{\eta}}}^2$, and $\xibf$ is defined as
        \begin{align*}
				\xibf = 
				\begin{cases} \begin{array}{ll}
					\!\!\![\xi_1,\ldots,\xi_K]^\T \triangleq \bm{\xi}_{\mathtt{MRT}} &\!\!\! \text{for~MRT} \\
					\!\!\!\frac{1}{\Nt(\Nt-K)}\left[\frac{1}{\xi_1},\ldots,\frac{1}{\xi_K}\right]^\T \triangleq \bm{\xi}_{\mathtt{ZF}} & \!\!\!\text{for~ZF}, \end{array}
				\end{cases} \nbthis \label{eq_def_xi_bf}
			\end{align*}
   with $\mathtt{bf} \in \{\mathtt{MRT}, \mathtt{ZF}\}$.
	\end{lemma}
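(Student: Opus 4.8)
The plan is to evaluate the average transmit power directly from its definition $P_{\mathtt{bf}} = \frac{1}{L}\mean{\fronorm{\mX}^2}$, where the expectation runs over both the data symbols and the channel estimate underlying $\mW$. Since $\mX = \mF\mS$, we have $\fronorm{\mX}^2 = \tr{\mF\mS\mS^\H\mF^\H}$, and applying the approximation $\mS\mS^\H \approx L\mI_K$ from \eqref{eq_symbol_matrix} reduces this to $P_{\mathtt{bf}} = \mean{\tr{\mF\mF^\H}}$. Substituting $\mF = \mW\bGamma + \vv\bar{\bm{\eta}}^\T$ and expanding the trace splits the power into three pieces: a communications term $\sum_{k=1}^{K}\gamma_k\norm{\vw_k}^2 = \tr{\mW\bGamma^2\mW^\H}$, a cross term $2\Re\left\{\vv^\H\mW\bGamma\bar{\bm{\eta}}\right\}$, and a sensing term $\norm{\bar{\bm{\eta}}}^2\norm{\vv}^2$. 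I would handle these three contributions in turn.

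For the communications term I would compute $\mean{\norm{\vw_k}^2}$ under each precoder. For MRT, $\vw_k = \hat{\vh}_k \sim \mathcal{CN}(0,\xi_k\mI_{\Nt})$, so $\mean{\norm{\vw_k}^2} = \Nt\xi_k$ and the term equals $\Nt\sum_{k=1}^{K}\xi_k\gamma_k = \bm{\xi}_{\mathtt{MRT}}^\T\bm{\gamma}\,\Nt$. The ZF case is the crux: here $\mean{\norm{\vw_k}^2} = \mean{\big[(\hat{\mH}^\H\hat{\mH})^{-1}\big]_{kk}}$. Writing $\hat{\mH} = \mZ\bm{\Xi}^{1/2}$ with $\bm{\Xi} = \diag{\xi_1,\ldots,\xi_K}$ and $\mZ$ having i.i.d. $\mathcal{CN}(0,1)$ entries gives $(\hat{\mH}^\H\hat{\mH})^{-1} = \bm{\Xi}^{-1/2}(\mZ^\H\mZ)^{-1}\bm{\Xi}^{-1/2}$. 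Because $\mZ^\H\mZ$ is a complex central Wishart matrix with $\Nt$ degrees of freedom, the inverse-Wishart mean $\mean{(\mZ^\H\mZ)^{-1}} = \frac{1}{\Nt-K}\mI_K$ (valid for $\Nt > K$) yields $\mean{\norm{\vw_k}^2} = \frac{1}{(\Nt-K)\xi_k}$. Summing gives $\sum_{k=1}^{K}\frac{\gamma_k}{(\Nt-K)\xi_k} = \Nt\bm{\xi}_{\mathtt{ZF}}^\T\bm{\gamma}$, which matches the stated $\bm{\xi}_{\mathtt{ZF}}$.

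Next I would show the cross term vanishes in expectation. As $\vv$ is deterministic, it is enough to prove $\mean{\vw_k} = \b0$. This is immediate for MRT since $\hat{\vh}_k$ is zero-mean. For ZF, $\mW = \mZ(\mZ^\H\mZ)^{-1}\bm{\Xi}^{-1/2}$, so I would invoke the circular symmetry of $\mZ$: its law is invariant under $\mZ \mapsto e^{j\varphi}\mZ$, whereas $\mZ(\mZ^\H\mZ)^{-1} \mapsto e^{j\varphi}\mZ(\mZ^\H\mZ)^{-1}$. Hence $\mean{\mZ(\mZ^\H\mZ)^{-1}}$ is invariant under multiplication by $e^{j\varphi}$ for every $\varphi$ and must therefore equal $\b0$, giving $\mean{\vw_k} = \b0$. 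The middle contribution thus drops out for both precoders.

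Finally, the sensing term is deterministic: $\vv$ is a steering vector of the form in \eqref{eq_at} whose entries all have unit modulus, so $\norm{\vv}^2 = \Nt$ and this term equals $\Nt\norm{\bar{\bm{\eta}}}^2 = \Nt\rho$. Collecting the three pieces gives $P_{\mathtt{bf}} = \Nt\xibf^\T\bm{\gamma} + \Nt\rho$ for both $\mathtt{bf} \in \{\mathtt{MRT},\mathtt{ZF}\}$, as claimed. I expect the ZF communications term — specifically the appeal to the inverse-Wishart expectation together with the requirement $\Nt > K$ — to be the only step demanding more than routine linear algebra; the cross-term and sensing-term arguments are short once the zero-mean and unit-modulus facts are in place.
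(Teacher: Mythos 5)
Your proposal is correct and follows essentially the same route as the paper's proof: expand $\mean{\tr{\mF_{\mathtt{bf}}\mF_{\mathtt{bf}}^\H}}$, evaluate the communications term via $\mean{\normshort{\hat{\vh}_k}^2}=\Nt\xi_k$ for MRT and the central complex Wishart identity for ZF (your matrix-mean form $\mean{(\mZ^\H\mZ)^{-1}}=\frac{1}{\Nt-K}\mI_K$ is equivalent to the paper's use of $\meanshort{\trshort{(\mZ^\H\mZ)^{-1}}}=\frac{K}{\Nt-K}$ plus column exchangeability), and use $\norm{\vv}^2=\Nt$ for the sensing term. If anything, you are more rigorous than the paper, which silently drops the cross terms; your circular-symmetry argument showing $\mean{\mZ(\mZ^\H\mZ)^{-1}}=\b0$ for ZF supplies a justification the paper omits.
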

        \begin{proof}
		See Appendix \ref{appd_power}.
	\end{proof}
	
	\section{Communications and Sensing Performance with MRT and ZF Precoders}
	\label{sec_perf_analysis}
	In this section, we derive lower bounds on the achievable rate of the communications subsystem with the MRT and ZF precoders. For the sensing subsystem, we derive the CRLB to evaluate the estimation accuracy for $\theta$ and $\phi$.

\subsection{Communications Performance}
	We rewrite \eqref{eq_y_kl} as
	\begin{align*}
		{y_{\mathtt{c}}}_{k\ell} &= \mathtt{DS}_k s_{k\ell} + \mathtt{BU}_k s_{k\ell} + \sum\nolimits_{j \neq k} \mathtt{UI}_{kj} s_{j\ell} + {n_{\mathtt{c}}}_{k\ell}, \nbthis \label{eq_ykl_MRT}
	\end{align*}
	where $\mathtt{DS}_k \triangleq \mean{ \vh_k^\H \vf_k}$, $ \mathtt{BU}_k \triangleq \vh_k^\H \vf_k - \mean{ \vh_k^\H \vf_k}$, and $\mathtt{UI}_{kj} \triangleq \vh_k^\H  \vf_j$ represent expectations associated with the desired signal, beamforming uncertainty, and inter-user interference, respectively. From \eqref{eq_ykl_MRT}, the achievable rate of the $k$-th user is given by
	\begin{align*}
		R_{k} = \bar{\tau} \log_2\!\left(1 + \frac{\abs{\mathtt{DS}_k}^2}{\mean{\abs{\mathtt{BU}_k}^2} +  \sum\limits_{j \neq k} \mean{\abs{\mathtt{UI}_{kj}}^2} + \sigmac} \right), \nbthis \label{eq_SE_def}
	\end{align*}
	where $\bar{\tau} \triangleq \left(\tau_{\mathtt{c}} - \tau_{\mathtt{p}}\right)/\tau_{\mathtt{c}}$. The following theorem presents the closed form expression for the achievable rate of the communications users.
	
	\begin{theorem}
		\label{theo_SE}
		The achievable rate for the $k$-th user with MRT or ZF precoding is given by
		\begin{align*}
			{R_{\mathtt{bf}}}_k(\bm{\gamma},\rho) &= \bar{\tau} \log_2 \left(1 + \frac{ \lambdabf_k \gamma_k }{\Nt \beta_k \rho  + \Nt {\bm{\zeta}_{\mathtt{bf}}^\T}_k \bm{\gamma} + \sigmac}\right), \nbthis \label{eq_SE_theo}
		\end{align*}
		where $\bm{\gamma} = [\gamma_1,\ldots,\gamma_K]^\T$, $\mathtt{bf} \in \{\mathtt{MRT}, \mathtt{ZF}\}$, and
		\begin{align*}
			\lambdabf_k &= 
			\begin{cases}
				\begin{array}{ll} \Nt^2 \xi_k^2 \triangleq {\lambda_{\mathtt{MRT}}}_k &  \text{for~MRT} \\
				1 \triangleq {\lambda_{\mathtt{ZF}}}_k & \text{for~ZF} \end{array}
			\end{cases} \nbthis \label{eq_def_lambda_bf} \\
			\zetabf_k \!&=\! 
			\begin{cases}
				\begin{array}{ll} 
                    \!\!\!\!\!\beta_k \left[\xi_1, \ldots, \xi_K\right]^\T \!\triangleq\! {\bm{\zeta}_{\mathtt{MRT}}}_k & \!\!\!\!\text{for~MRT} \\
				\!\!\!\!\!\frac{\epsilon_k}{\Nt(\Nt-K)} \left[ \frac{1}{\xi_1}, \ldots, \frac{1}{\xi_K} \right]^\T \!\triangleq \!{\bm{\zeta}_{\mathtt{ZF}}}_k & \!\!\!\!\text{for~ZF}. \end{array}
			\end{cases} \nbthis \label{eq_def_zeta_bf}
		\end{align*}
	\end{theorem}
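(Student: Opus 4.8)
The plan is to evaluate in closed form the three second-order quantities entering the rate expression \eqref{eq_SE_def}, namely $\abs{\mathtt{DS}_k}^2$, $\mean{\abs{\mathtt{BU}_k}^2}$, and $\mean{\abs{\mathtt{UI}_{kj}}^2}$, separately for $\mW=\mW_{\mathtt{MRT}}$ and $\mW=\mW_{\mathtt{ZF}}$, and then collect terms. Throughout I would use that $\vh_k\sim\mathcal{CN}(0,\beta_k\mI_{\Nt})$, that the MMSE estimate $\hat{\vh}_k\sim\mathcal{CN}(0,\xi_k\mI_{\Nt})$ is independent of the error $\ve_k\sim\mathcal{CN}(0,\epsilon_k\mI_{\Nt})$ with $\xi_k+\epsilon_k=\beta_k$, and that $\vv=\av(\tilde{\theta},\tilde{\phi})$ is deterministic with unit-modulus entries, so from \eqref{eq_at}--\eqref{eq_atv} we have $\norm{\vv}^2=\Nth\Ntv=\Nt$. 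The recurring observation is that, since $\vh_k$ is zero-mean and $\vv$ deterministic, $\mean{\vh_k^\H\vv}=0$ and $\mean{\abs{\vh_k^\H\vv}^2}=\vv^\H(\beta_k\mI_{\Nt})\vv=\Nt\beta_k$; moreover, the same zero-mean/independence reasoning, together with the circular symmetry of $\hat{\mH}$ (which forces $\mean{\vw_k}=\mathbf{0}$ for ZF), makes every cross term between the communications part $\sqrt{\gamma_j}\vw_j$ and the sensing part $\sqrt{\eta_j}\vv$ of each $\vf_j=\sqrt{\gamma_j}\vw_j+\sqrt{\eta_j}\vv$ vanish. The sensing contributions then decouple, and summing them over the $\mathtt{BU}$ stream ($j=k$) and the $K-1$ interfering streams ($j\neq k$) yields exactly $\Nt\beta_k\sum_j\eta_j=\Nt\beta_k\rho$, accounting for the first denominator term in \eqref{eq_SE_theo}.

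For the desired-signal term, $\mean{\vh_k^\H\vv}=0$ gives $\mathtt{DS}_k=\sqrt{\gamma_k}\,\mean{\vh_k^\H\vw_k}$. For MRT, $\mean{\vh_k^\H\hat{\vh}_k}=\mean{\norm{\hat{\vh}_k}^2}=\Nt\xi_k$, so $\abs{\mathtt{DS}_k}^2=\Nt^2\xi_k^2\gamma_k$; for ZF, the orthogonality $\hat{\vh}_k^\H\vw_k=1$ together with $\mean{\ve_k^\H\vw_k}=0$ (independence) gives $\mathtt{DS}_k=\sqrt{\gamma_k}$ and $\abs{\mathtt{DS}_k}^2=\gamma_k$. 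These match $\lambdabf_k\gamma_k$ in \eqref{eq_def_lambda_bf}.

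Next I would compute the communications interference, which requires $\var{\vh_k^\H\vw_k}$ (the $\mathtt{BU}$ part) and $\mean{\abs{\vh_k^\H\vw_j}^2}$ for $j\neq k$ (the $\mathtt{UI}$ part). For MRT I would split $\vh_k^\H\hat{\vh}_k=\norm{\hat{\vh}_k}^2+\ve_k^\H\hat{\vh}_k$ and use $\var{\norm{\hat{\vh}_k}^2}=\Nt\xi_k^2$ and $\var{\ve_k^\H\hat{\vh}_k}=\Nt\xi_k\epsilon_k$ to get $\var{\vh_k^\H\vw_k}=\Nt\xi_k\beta_k$, while for $j\neq k$, conditioning on $\hat{\vh}_j$ and using $\mean{\vh_k\vh_k^\H}=\beta_k\mI_{\Nt}$ gives $\mean{\abs{\vh_k^\H\hat{\vh}_j}^2}=\Nt\beta_k\xi_j$; adding the $\mathtt{BU}$ term, these combine into $\Nt\beta_k\sum_j\xi_j\gamma_j=\Nt\,{\bm{\zeta}_{\mathtt{MRT}}^\T}_k\bm{\gamma}$. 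For ZF I would write $\vh_k^\H\vw_j=\hat{\vh}_k^\H\vw_j+\ve_k^\H\vw_j$, which equals $\ve_k^\H\vw_j$ for $j\neq k$ and $1+\ve_k^\H\vw_k$ for $j=k$, so that conditioning on $\hat{\mH}$ yields $\mean{\abs{\ve_k^\H\vw_j}^2}=\epsilon_k\mean{\norm{\vw_j}^2}$ in both cases.

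The main obstacle is the single nontrivial expectation $\mean{\norm{\vw_j}^2}$ for ZF. I would factor $\hat{\mH}=\tilde{\mH}\,\bSigma^{1/2}$ with $\tilde{\mH}$ having i.i.d.\ $\mathcal{CN}(0,1)$ entries and $\bSigma=\diag{\xi_1,\ldots,\xi_K}$, so that $\norm{\vw_j}^2=[(\hat{\mH}^\H\hat{\mH})^{-1}]_{jj}=\frac{1}{\xi_j}[(\tilde{\mH}^\H\tilde{\mH})^{-1}]_{jj}$, and then invoke the standard inverse complex-Wishart identity $\mean{(\tilde{\mH}^\H\tilde{\mH})^{-1}}=\frac{1}{\Nt-K}\mI_K$ (valid for $\Nt>K$) to obtain $\mean{\norm{\vw_j}^2}=\frac{1}{\xi_j(\Nt-K)}$. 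Substituting gives $\mean{\abs{\vh_k^\H\vw_j}^2}=\frac{\epsilon_k}{\xi_j(\Nt-K)}$, and summing the $\mathtt{BU}$ and $\mathtt{UI}$ contributions produces $\frac{\epsilon_k}{\Nt-K}\sum_j\frac{\gamma_j}{\xi_j}=\Nt\,{\bm{\zeta}_{\mathtt{ZF}}^\T}_k\bm{\gamma}$. Finally, assembling the numerator $\abs{\mathtt{DS}_k}^2=\lambdabf_k\gamma_k$, the sensing term $\Nt\beta_k\rho$, the communications term $\Nt\,{\bm{\zeta}_{\mathtt{bf}}^\T}_k\bm{\gamma}$, and the noise $\sigmac$ into \eqref{eq_SE_def} reproduces \eqref{eq_SE_theo} for both precoders.
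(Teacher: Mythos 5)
Your proposal is correct and takes essentially the same route as the paper's proof in Appendix~\ref{appd_SE}: the $\mathtt{DS}_k/\mathtt{BU}_k/\mathtt{UI}_{kj}$ decomposition of \eqref{eq_SE_def}, Gaussian moment calculations for MRT, and the inverse-Wishart identity giving $\mean{[(\hat{\mH}^\H\hat{\mH})^{-1}]_{kk}} = \frac{1}{(\Nt-K)\xi_k}$ for ZF. The only cosmetic differences are that you obtain the MRT beamforming-uncertainty term by splitting $\vh_k^\H\hat{\vh}_k = \normshort{\hat{\vh}_k}^2 + \ve_k^\H\hat{\vh}_k$ and summing variances (the paper instead subtracts $\abs{\mathtt{DS}_k}^2$ from the fourth moment $\Nt(\Nt+1)\xi_k^2$), and that you justify the vanishing of the ZF cross terms explicitly via circular symmetry forcing $\mean{\vw_k}=\bm{0}$, a point the paper leaves implicit.
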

	
	\begin{proof}
		See Appendix \ref{appd_SE}. \epr
	\end{proof}
        \begin{remark}
		\label{rm_effect_of_sensing}
		It is observed from \eqref{eq_SE_theo} that, due to the term $\Nt \beta_k \rho$ in the denominator of the SINR, sensing increases beamforming uncertainty and inter-user interference, and thus causes performance degradation to the communications subsystem for both MRT and ZF. 
	\end{remark}
 
        To obtain basic insights into the impact of an increasing number of antennas on the communications performance, we consider a simplified case where the transmit power is equally shared between communications and sensing, and the communications users all have the same power control factors, i.e., $\gamma_1 = \ldots = \gamma_K$. In this case, based on \eqref{eq_power}, we have
        \begin{align*}
            \rho = \frac{P_{\mathtt{bf}}}{2\Nt},\ \gamma_k = \frac{P_{\mathtt{bf}}}{2\Nt \sum_{j=1}^K {\xi_{\mathtt{bf}}}_j},\ \forall k, \nbthis \label{eq_equal_power_solution}
        \end{align*}
        where ${\xi_{\mathtt{bf}}}_k$ is the $k$-th element of $\xibf$. Accordingly, the achievable rate of the $k$-th user with MRT and ZF precoding is given as
        \begin{align*}
            \bar{R}_{\mathtt{MRT}k} &= \bar{\tau} \log_2 \left(1 + \frac{ \Nt \xi_k^2 P_{\mathtt{bf}} }{ 2 \left( \beta_k P_{\mathtt{bf}} + \sigmac \right) \sum_{j=1}^K \xi_k }\right),\ \nbthis \label{eq_R_MRT} \\
            \bar{R}_{\mathtt{ZF}k} &= \bar{\tau} \log_2 \left(1 + \frac{ (\Nt-K)P_{\mathtt{bf}} }{\left( \left(\beta_k + \epsilon_k\right) P_{\mathtt{bf}} + 2\sigmac \right) \sum_{j=1}^K \frac{1}{\xi_j} }\right).\ \nbthis \label{eq_R_ZF}
        \end{align*}
        \begin{remark}
            \label{rm_free_interference}
            In \eqref{eq_R_MRT} and \eqref{eq_R_ZF}, the numerators of the SINR terms increase with $\Nt$, while the denominators do not. Therefore, massive MIMO ISAC systems with both MRT and ZF precoding and equal power allocation become interference-free as $\Nt \rightarrow \infty$, similar to conventional massive MIMO systems without any sensing function. In other words, deploying a very large number of transmit antennas can mitigate the impact of sensing on communications performance.
        \end{remark}

    Although Remark \ref{rm_free_interference} is obtained for the equal power allocation in \eqref{eq_equal_power_solution}, we will numerically show in Section \ref{sec_sim} that it is also valid for other considered power allocation schemes. Next, we investigate the sensing performance. 
	
	\subsection{Sensing Performance}
	
	In the sensing subsystem, we are interested in characterizing the CRLB for estimating the target angles, i.e., $(\theta, \phi)$. Denote $\vy_{\mathtt{s}} = \mathtt{vec}(\mY_{\mathtt{s}}) \in \setC^{\Nr L \times 1}$, $\vx_{\mathtt{s}} \triangleq \mathtt{vec}(\mG \mX) \in \setC^{\Nr L \times 1}$, $\vn_{\mathtt{s}} = \mathtt{vec}(\mN_{\mathtt{s}}) \in \setC^{\Nr L \times 1}$, and $\vn_{\mathtt{s}} \sim \mathcal{CN}(\bm{0}, \sigmas \mI_{\Nr L})$, and rewrite \eqref{eq_radar_model} as
	\begin{align*}
		\vy_{\mathtt{s}} = \alpha \vx_{\mathtt{s}} + \vn_{\mathtt{s}}. \nbthis \label{eq_radar_model_2}
	\end{align*}
	Let $\bm{\omega} \triangleq [\theta, \phi, \bar{\bm{\alpha}}]^\T \in \setR^{4 \times 1}$ be the vector of unknown parameters to be estimated from \eqref{eq_radar_model_2}, where $\bar{\bm{\alpha}} = [\re{\alpha}, \im{\alpha}]^\T$. The maximum likelihood estimate (MLE) of $\bm{\omega}$ is given as
	\begin{align*}
		\bm{\omega}_{\mathtt{MLE}} = \argmin_{\bm{\omega}} \norm{\vy_{\mathtt{s}} - \alpha \vx_{\mathtt{s}}}^2. \nbthis \label{eq_MLE}
	\end{align*}
The Fisher information matrix for estimating $\bm{\omega}$, denoted by $\mJ_{\bm{\omega}}$, can be obtained as~\cite{liu2021cramer, song2023intelligent, bekkerman2006target}
	\begin{align*}
		\mJ_{\bm{\omega}} = \re{\frac{\partial \vx_{\mathtt{s}}^\H}{\partial \bm{\omega}} \mR_{\mathtt{s}}^{-1} \frac{\partial \vx_{\mathtt{s}}}{\partial \bm{\omega}}} = \frac{2}{\sigmas} \re{\frac{\partial \vx_{\mathtt{s}}^\H}{\partial \bm{\omega}}  \frac{\partial \vx_{\mathtt{s}}}{\partial \bm{\omega}}}, \nbthis \label{eq_FIMa}
	\end{align*}
	where $\mR_{\mathtt{s}} = \sigmas \mI_{\Nr L}$. We can partition $\mJ_{\bm{\omega}}$ into blocks as
	\begin{align*}
		\mJ_{\bm{\omega}} = 
		\begin{bmatrix}
			\Jtt  & \Jtp & \Jta\\
			\Jtp & \Jpp & \Jpa \\
			\Jta^\T & \Jpa^\T &\Jaa
		\end{bmatrix},
	\end{align*}
	with the (block) elements of $\mJ_{\bm{\omega}}$ obtained as~\cite{bekkerman2006target}
	\begin{align*}
		\Jtt &= \kappa \abs{\alpha}^2 \tr{\dot{\mG}_{\theta} \mR_x \dot{\mG}^\H_{\theta}} \in \setR^{1 \times 1}, \nbthis \label{eq_def_Jtt} \\
		\Jtp &= \kappa \abs{\alpha}^2 \tr{\dot{\mG}_{\phi} \mR_x \dot{\mG}^\H_{\theta}} \in \setR^{1 \times 1}, \nbthis \label{eq_def_Jtp}\\
		\Jta &= \kappa \re{\alpha^* \tr{\mG \mR_x \dot{\mG}^\H_{\theta}}[1,j]} \in \setR^{1 \times 2}, \nbthis \label{eq_def_Jta}\\
		\Jpp &= \kappa \abs{\alpha}^2 \tr{\dot{\mG}_{\phi} \mR_x \dot{\mG}^\H_{\phi}} \in \setR^{1 \times 1}, \nbthis \label{eq_def_Jpp} \\
		\Jpa &= \kappa \re{\alpha^* \tr{\mG \mR_x \dot{\mG}^\H_{\phi}}[1,j]} \in \setR^{1 \times 2}, \nbthis \label{eq_def_Jpa}\\
		\Jaa &= \kappa \tr{\mG \mR_x \mG^\H} \mI_2  \in \setR^{2 \times 2}, \nbthis \label{eq_def_Jaa}
	\end{align*}
where
	\begin{align*}
		&\dot{\mG}_{\theta} = \bdottheta \va^\H + \vb \adottheta^\H,\ 
		\dot{\mG}_{\phi} = \bdotphi \va^\H + \vb \adotphi^\H, \nbthis \label{eq_dot_G} \\
            &\mR_x = \frac{1}{L} \mean{\mX \mX^\H}  = \!\frac{1}{L} \mean{\mF_{\mathtt{bf}} \mS \mS^\H \mF_{\mathtt{bf}}^\H}\\  
            &\hspace{3cm}= \mean{\mF_{\mathtt{bf}} \mF_{\mathtt{bf}}^\H} \triangleq \mR_{\mathtt{bf}}, \nbthis\label{eq_approx_cov}
	\end{align*}
	and $\kappa \triangleq \frac{2L}{\sigmas}$. The equivalent FIM for $(\theta, \phi)$ is \cite{huang2022joint}
	\begin{align*}
		\mJ_{\theta, \phi} = 
		\begin{bmatrix}
			\Jtt  & \Jtp \\
			\Jtp & \Jpp - \Jpa \Jaa^{-1} \Jpa^\T
		\end{bmatrix} = 
		\begin{bmatrix}
			\Jtt  & \Jtp \\
			\Jtp & \Jpat
		\end{bmatrix}, \nbthis \label{eq_FIM}
	\end{align*}
	where
    \begin{align*}
        \Jpat \triangleq \Jpp - \Jpa \Jaa^{-1} \Jpa^\T. \nbthis \label{eq_Jpat}
    \end{align*}
    The diagonal entries of $\mJ_{\theta, \phi}^{-1}$ serve as lower bounds on the MSE of any unbiased estimate of $\theta$ and $\phi$. As a result, we can derive the CRLBs associated with $\theta$ and $\phi$ with MRT and ZF precoding in the following theorem.
    
\begin{theorem}
    \label{theo_CRB}
    The CRLBs for $\theta$ and $\phi$ in the considered massive MIMO ISAC system for either MRT or ZF precoding admits the common closed form expressions as follows: 
    \begin{align*}
		&\widetilde{\mathtt{CRLB}}_{\mathtt{bf},\theta}(\bm{\gamma},\rho) = \left(\Jtt - \frac{\Jtp^2}{\Jpat}\right)^{-1}, \nbthis \label{eq_CRB_theta_0}\\
		&\widetilde{\mathtt{CRLB}}_{\mathtt{bf},\phi}(\bm{\gamma},\rho)  = \left(\Jpat - \frac{\Jtp^2}{\Jtt}\right)^{-1}, \nbthis \label{eq_CRB_phi_0}
	\end{align*}
    where
	    \begin{align*}
        \Jtt &= \kappa \abs{\alpha}^2 \Big(\bm{\xi}_{\mathtt{bf}}^\T \bm{\gamma} \left( \Nr \norm{\adottheta}^2 + \Nt \normshort{\bdottheta}^2\right)\\
        &\hspace{1.5cm} + \rho {\left( {\norm{\vv^\H \va}^2} \normshort{\bdottheta}^2 + \Nr {\norm{\vv^\H \adottheta}^2} \right)} \Big). \nbthis \label{eq_Jtt} \\
		\Jpp\! &=\! \kappa\! \abs{\alpha}^2 \Big( \bm{\xi}_{\mathtt{bf}}^\T \bm{\gamma} \left(\Nr \norm{\adotphi}^2\! +\! \Nt \normshort{\bdotphi}^2\right) \\
        &\hspace{1.5cm} + \rho {\left( {\norm{\vv^\H \va}^2} \normshort{\bdotphi}^2 + \Nr {\norm{\vv^\H \adotphi}^2}  \right)}\Big), \nbthis \label{eq_Jpp}  \\
		\Jtp\! &= \kappa \abs{\alpha}^2 \Big(\bm{\xi}_{\mathtt{bf}}^\T \bm{\gamma} \left( \Nr \adottheta^\H \adotphi  + \Nt \bdottheta^\H \bdotphi\right) + \\
        &\hspace{1.5cm} + \rho {\left( {\norm{\vv^\H \va}^2} \bdotphi^\H \bdottheta + \Nr  \adottheta^\H \vv \vv^\H \adotphi \right)}\Big), \nbthis \label{eq_Jtp}  \\
		\Jaa &= \kappa \left( \bm{\xi}_{\mathtt{bf}}^\T \bm{\gamma} \Nt \Nr + \rho \Nr {\norm{\vv^\H \va}^2}  \right)  \mI_2, \nbthis \label{eq_Jaa} \\
        \Jpa &= \kappa \rho \Nr \re{\va^\H {\vv \vv^\H} \adotphi [1,j]}, \nbthis \label{eq_Jpa}
	\end{align*}
    and $\Jpat$ is computed based on \eqref{eq_Jpat}. Here, $\mathtt{bf} \in \{\mathtt{MRT}, \mathtt{ZF}\}$, and $\xibf$ is defined in \eqref{eq_def_xi_bf}. 
	\end{theorem}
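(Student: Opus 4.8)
The plan is to split the argument into the $2\times 2$ matrix inversion that produces \eqref{eq_CRB_theta_0}--\eqref{eq_CRB_phi_0} and the closed-form evaluation of the Fisher-information entries \eqref{eq_Jtt}--\eqref{eq_Jpa}; the former is immediate and the latter carries the work. Taking the equivalent FIM $\mJ_{\theta, \phi}$ in \eqref{eq_FIM} as given, the two CRLBs are the diagonal entries of $\mJ_{\theta,\phi}^{-1}$. Using $\begin{bmatrix} a & b \\ b & d\end{bmatrix}^{-1} = \frac{1}{ad-b^2}\begin{bmatrix} d & -b \\ -b & a\end{bmatrix}$ and rewriting $\frac{d}{ad-b^2} = (a-b^2/d)^{-1}$, $\frac{a}{ad-b^2} = (d-b^2/a)^{-1}$ with $a=\Jtt$, $b=\Jtp$, $d=\Jpat$ yields \eqref{eq_CRB_theta_0}--\eqref{eq_CRB_phi_0} directly.

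The core is to evaluate the traces \eqref{eq_def_Jtt}--\eqref{eq_def_Jaa}, for which I first need the transmit covariance $\mR_x=\mR_{\mathtt{bf}}=\mean{\mF_{\mathtt{bf}}\mF_{\mathtt{bf}}^\H}$. Substituting $\mF_{\mathtt{bf}}=\mW_{\mathtt{bf}}\bGamma+\vv\bar{\bm{\eta}}^\T$ and expanding, the two cross terms vanish because they equal $\mean{\mW_{\mathtt{bf}}}$ times deterministic factors and $\mean{\mW_{\mathtt{bf}}}=\mathbf{0}$ by circular symmetry of the Gaussian estimates (true for $\mW_{\mathtt{MRT}}=\hat{\mH}$ and, via $\hat{\mH}\to e^{j\psi}\hat{\mH}$, for $\mW_{\mathtt{ZF}}=\hat{\mH}(\hat{\mH}^\H\hat{\mH})^{-1}$ as well). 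The sensing term gives $\rho\vv\vv^\H$ since $\bar{\bm{\eta}}^\T\bar{\bm{\eta}}=\rho$, and the communications term is $\sum_k\gamma_k\mean{\vw_{\mathtt{bf},k}\vw_{\mathtt{bf},k}^\H}$. For MRT this is $\sum_k\gamma_k\xi_k\mI_{\Nt}$; for ZF the key step is $\mean{\vw_{\mathtt{ZF},k}\vw_{\mathtt{ZF},k}^\H}=\frac{1}{\Nt(\Nt-K)\xi_k}\mI_{\Nt}$. Either way this produces the unified $\mR_{\mathtt{bf}}=(\bm{\xi}_{\mathtt{bf}}^\T\bm{\gamma})\mI_{\Nt}+\rho\vv\vv^\H$ with $\bm{\xi}_{\mathtt{bf}}$ as in \eqref{eq_def_xi_bf}, whose trace is consistent with $P_{\mathtt{bf}}$ in Lemma \ref{lemma_power}.

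Next I would record the identities that collapse the traces: since the UPA is referenced to its center, the element indices are symmetric about zero, so $\va^\H\adottheta=\va^\H\adotphi=0$ and $\vb^\H\bdottheta=\vb^\H\bdotphi=0$, while $\norm{\va}^2=\norm{\vv}^2=\Nt$ and $\norm{\vb}^2=\Nr$ from the unit-modulus entries. Substituting $\mR_{\mathtt{bf}}$ together with $\dot{\mG}_\theta=\bdottheta\va^\H+\vb\adottheta^\H$ and $\dot{\mG}_\phi=\bdotphi\va^\H+\vb\adotphi^\H$ into each trace and expanding, the orthogonality relations annihilate every steering-vector/derivative cross product. In the $(\bm{\xi}_{\mathtt{bf}}^\T\bm{\gamma})\mI_{\Nt}$ part this leaves $\Nr\norm{\adottheta}^2+\Nt\normshort{\bdottheta}^2$ for $\Jtt$ (and the analogues $\Nr\norm{\adotphi}^2+\Nt\normshort{\bdotphi}^2$ for $\Jpp$, $\Nr\adottheta^\H\adotphi+\Nt\bdottheta^\H\bdotphi$ for $\Jtp$), while in the $\rho\vv\vv^\H$ part the same relations kill the $\tr{\vb\bdottheta^\H}$-type terms and leave the $\norm{\vv^\H\va}^2\normshort{\bdottheta}^2+\Nr\norm{\vv^\H\adottheta}^2$ structure; for $\Jaa$ only $\va^\H\mR_{\mathtt{bf}}\va=\Nt(\bm{\xi}_{\mathtt{bf}}^\T\bm{\gamma})+\rho\norm{\vv^\H\va}^2$ survives. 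This reproduces \eqref{eq_Jtt}--\eqref{eq_Jpa}, after which $\Jpat$ is assembled from \eqref{eq_Jpat} and fed into the CRLB formulas.

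The main obstacle is the ZF covariance, i.e. showing that $\mean{\vw_{\mathtt{ZF},k}\vw_{\mathtt{ZF},k}^\H}$ is exactly the scalar matrix $\frac{1}{\Nt(\Nt-K)\xi_k}\mI_{\Nt}$ and not merely that its trace equals $\frac{1}{(\Nt-K)\xi_k}$. I would argue this in two moves: (i) by the rotational invariance of $\hat{\mH}$ (for unitary $\mU$, $\mU\hat{\mH}\stackrel{d}{=}\hat{\mH}$ and $\mW_{\mathtt{ZF}}(\mU\hat{\mH})=\mU\mW_{\mathtt{ZF}}(\hat{\mH})$), the expectation commutes with every $\mU$ and is therefore proportional to $\mI_{\Nt}$; (ii) its trace is $\mean{[(\hat{\mH}^\H\hat{\mH})^{-1}]_{kk}}=\frac{1}{(\Nt-K)\xi_k}$ from the complex inverse-Wishart mean, fixing the constant. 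This is the sole point requiring more than routine algebra, and it is where the $\Nt-K$ factor --- hence the MRT/ZF split in \eqref{eq_def_xi_bf} --- enters; everything downstream of the covariance and the orthogonality relations is bookkeeping.
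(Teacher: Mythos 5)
Your proposal is correct and follows essentially the same route as the paper's proof: reduce the CRLBs to the diagonal entries of the inverse $2\times 2$ equivalent FIM, establish the unified covariance $\mR_{\mathtt{bf}}=\bm{\xi}_{\mathtt{bf}}^\T\bm{\gamma}\,\mI_{\Nt}+\rho\vv\vv^\H$ (with the ZF case handled by unitary invariance plus the Wishart trace $\meanshort{\trshort{(\mZ^\H\mZ)^{-1}}}=K/(\Nt-K)$, exactly as in Appendix C, only organized per column rather than for the whole matrix $\mean{\mW_{\mathtt{ZF}}\bm{\Gamma}^2\mW_{\mathtt{ZF}}^\H}$), and then collapse the traces using $\va^\H\adottheta=\va^\H\adotphi=0$, $\vb^\H\bdottheta=\vb^\H\bdotphi=0$. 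Your explicit phase-rotation justification that the ZF cross terms $\mean{\mW_{\mathtt{ZF}}}\bm{\Gamma}\bar{\bm{\eta}}\vv^\H$ vanish is a small bonus the paper leaves implicit, but it does not change the argument.
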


    \begin{proof}
		See Appendix \ref{appd_CRB}. \epr
	\end{proof}
    \begin{figure}[t]
        \centering
        \includegraphics[scale=0.55]{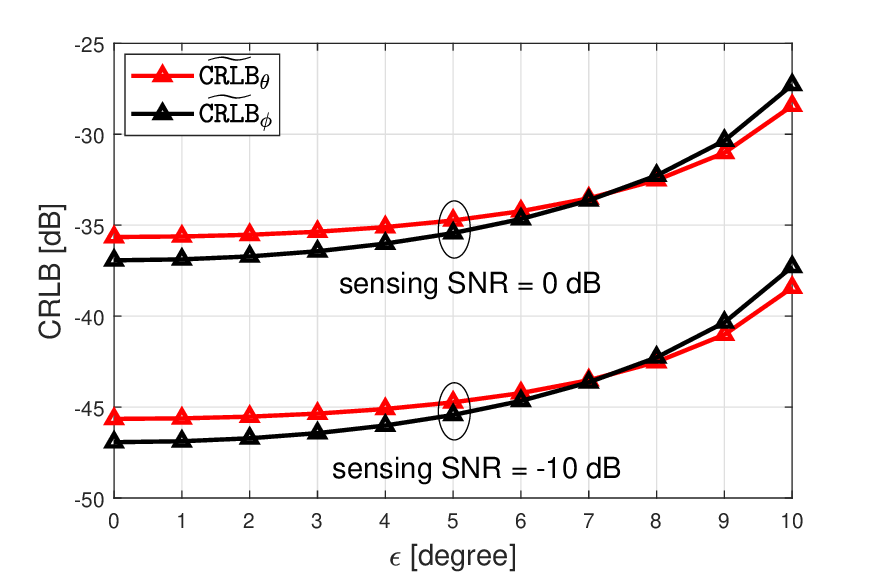}
        \caption{{CRLB for $\theta$ and $\phi$ when the sensing beamformer $\vv$ is determined by $\tilde{\theta} = \theta + \epsilon$ and $\tilde{\phi} = \phi + \epsilon$, $\Nt=25$, $\Nr=25$, $K=8$, $L=30$.}}
        \label{fig_CRB_general}
    \end{figure}

    As explained earlier, the CRLB in \eqref{eq_CRB_theta_0} and \eqref{eq_CRB_phi_0} will depend on the choice of the sensing beamformer $\vv$. Fig.\ \ref{fig_CRB_general} shows the CRLB obtained from Theorem \ref{theo_CRB} when $\Nt= \Nr=25$, $K=8$, $L=30$, and $\vv$ is chosen as $\vv = \va(\tilde{\theta},\tilde{\phi})$, where $\tilde{\theta} = \theta + \epsilon$, $\tilde{\phi} = \phi + \epsilon$, and $\epsilon \in [0^\circ, 10^\circ]$. 
    As expected, the target angle estimation performance deteriorates when the sensing beamformer is not pointed directly at the target, but we see that the CRLB is not sensitive to misalignment of the sensing beam.
    
     The CRLB expressions in \eqref{eq_CRB_theta_0} and \eqref{eq_CRB_phi_0} are complicated due to the intricate structure of $\Jtt$, $\Jpp$, $\Jtp$, $\Jaa$, $\Jpa$, and especially $\Jpat$, making them intractable for analysis and even more so for optimization-based system design. In the following remark, we simplify the CRLB expression for the case where $(\tilde{\theta}, \tilde{\phi}) = (\theta, \phi)$, i.e., when the sensing beam is perfectly aligned with the target angles. In this case, we have $\vv = \va$. While this assumption does not affect the closed form expression for the achievable rate in Theorem \ref{theo_SE}, it significantly simplifies the CRLB expressions, facilitating both analysis and design.
     
	\begin{figure*}
		\begin{align*}
			&\mathtt{CRLB}_{\mathtt{bf},\theta}(\bm{\gamma},\rho) = \frac{1}{\kappa \abs{\alpha}^2 } \left(\xibf^\T \bm{\gamma} \left( \Nr \norm{\adottheta}^2 + \Nt \normshort{\bdottheta}^2\right) + \rho \Nt^2 \normshort{\bdottheta}^2 - \frac{\left(\xibf^\T \bm{\gamma} \left(\Nr \adottheta^\H \adotphi  + \Nt \bdottheta^\H \bdotphi\right) + \rho \Nt^2 \bdottheta^\H \bdotphi\right)^2}{ \xibf^\T \bm{\gamma}\left( \Nr \norm{\adotphi}^2 + \Nt \normshort{\bdotphi}^2\right) + \rho \Nt^2 \normshort{\bdotphi}^2}\right)^{-1} \nbthis \label{eq_CRB_theta_1}\\
			&\mathtt{CRLB}_{\mathtt{bf},\phi}(\bm{\gamma},\rho) = \frac{1}{\kappa \abs{\alpha}^2 }  \left(\xibf^\T \bm{\gamma}\left( \Nr \norm{\adotphi}^2 + \Nt \normshort{\bdotphi}^2\right) + \rho \Nt^2 \normshort{\bdotphi}^2 - \frac{\left(\xibf^\T \bm{\gamma} \left(\Nr \adottheta^\H \adotphi  + \Nt \bdottheta^\H \bdotphi\right) + \rho \Nt^2 \bdottheta^\H \bdotphi\right)^2}{\xibf^\T \bm{\gamma} \left( \Nr \norm{\adottheta}^2 + \Nt \normshort{\bdottheta}^2\right) + \rho \Nt^2 \normshort{\bdottheta}^2}\right)^{-1} \nbthis \label{eq_CRB_phi_1}
		\end{align*}
		\hrule
	\end{figure*}

	\begin{remark}
			\label{rm_CRB_0}
			{For the case where $\vv = \va$}, the CRLB for $\theta$ and $\phi$ in the considered massive MIMO ISAC system for either MRT or ZF precoding can be simplified to \eqref{eq_CRB_theta_1} and \eqref{eq_CRB_phi_1}, where 
            \begin{align*}
				\norm{\adottheta}^2 &= \frac{\Nt(\Nth^2-1)}{12}  \pi^2 \cos^2(\theta) \sin^2(\phi) , \nbthis \label{eq_norm_adot_theta} \\
				\norm{\adotphi}^2 	&= \frac{\Nt}{12} \pi^2 \cos^2(\phi)  \left((\Nth^2-1) \sin^2(\theta)  + (\Ntv^2-1)\right), \nbthis \label{eq_norm_adot_phi}\\
				\normshort{\bdottheta}^2 &= \frac{\Nr(\Nrh^2-1)}{12} \pi^2 \cos^2(\theta) \sin^2(\phi) , \nbthis \label{eq_norm_bdot_theta} \\
				\normshort{\bdotphi}^2 	&= \frac{\Nr}{12} \pi^2 \cos^2(\phi)  \left((\Nrh^2-1) \sin^2(\theta)  + (\Nrv^2-1)\right), \nbthis \label{eq_norm_bdot_phi}\\
				\adottheta^\H \adotphi &=  \frac{\Nt(\Nth^2-1)}{12} \pi^2 \sin(\phi) \sin(\theta) \cos(\phi) \cos(\theta) , \nbthis \label{eq_adot_theta_adot_phi}\\
				\bdottheta^\H \bdotphi &= \frac{\Nr(\Nrh^2-1)}{12} \pi^2 \sin(\phi) \sin(\theta) \cos(\phi) \cos(\theta). \nbthis \label{eq_bdot_theta_bdot_phi}
            \end{align*}
	\end{remark}
	
	\begin{proof}
		See Appendix \ref{appd_proof_of_remark_CRB}. \epr
	\end{proof}

        {Based on the closed form expressions in Remark \ref{rm_CRB_0},} we characterize important properties of the CRLB in the following remarks.
        \begin{remark}
		\label{rm_CRB_2}
		Different power allocation schemes can achieve the same CRLB as long as they provide the same power fraction between communications and sensing, i.e., the same $\bm{\xi}^\T \bm{\gamma}$ and $\rho$. The power allocation among communications users, determined by $\{\gamma_k\}_{k=1}^K$, does not affect the CRLB.
	\end{remark}
 

        \begin{remark}
            \label{rm_CRB_1}
            Consider a square UPA, i.e., $\Nth = \Ntv = \sqrt{\Nt} \in \mathbb{N}$, and consider the equal power allocation in \eqref{eq_equal_power_solution}. In this case, the CRLB for both $\theta$ and $\phi$ decrease to zero when $\Nt \rightarrow \infty$.
        \end{remark}

        \begin{proof}
            See Appendix \ref{appd_CRB_go_to_zero}. \epr
        \end{proof}

    We have shown in Remarks \ref{rm_free_interference} and \ref{rm_CRB_2} that with the simplified equal power allocation in \eqref{eq_equal_power_solution}, the interference is mitigated and a very low CRLB can be achieved by a very large $\Nt$. However, this does not guarantee a good communications--sensing performance tradeoff. In particular, Remark \ref{rm_effect_of_sensing} implies that the massive MIMO ISAC system requires a proper power allocation between communications and sensing; otherwise, there will be significant performance loss due to sensing. In the next section, we present the power allocation problem and its solution.
    	
	\section{Communications and Sensing Power Allocation} 
	\label{sec_opt}
        \subsection{Problem Formulation}
	We are interested in a communications-centric design to maximize the communications rate while ensuring constraints on the sensing CRLB and the total transmit power. It is worth noting from \eqref{eq_SE_theo}, \eqref{eq_CRB_theta_1}, and \eqref{eq_CRB_phi_1} that both the achievable rate and CRLB can be determined using only $\rho$ rather than the sensing power factors $\{\eta_1,\ldots,\eta_K\}$. Here we recall that $\rho = \norm{\bar{\bm{\eta}}}^2 = \sum_{k=1}^K \eta_k$. With this observation, the optimization of $\{\gamma_k, \eta_k\}_{k=1}^K$ reduces to optimizing $\{\gamma_k\}_{k=1}^K$ and $\rho$. Using the result in Lemma \ref{lemma_power}, the power allocation problems for MRT and ZF precoding can be formulated as:
	\begin{subequations}\label{ori_problem_MRTZF}
		\begin{IEEEeqnarray} {rcl}
			(\mathcal{P}_{\mathtt{bf}}): \quad & \underset{\substack{ \bm{\gamma}, \rho }}{\textrm{maximize}}  & \quad  \sum_{k=1}^K {R_{\mathtt{bf}}}_k(\bm{\gamma}, \rho)  \label{eq::probMRT::obj_func_MRT} \\
			&\textrm{subject to} 
			&\quad \mathtt{CRLB}_{\mathtt{bf},\theta}(\bm{\gamma},\rho) \leq \mathtt{CRLB}_{\theta}^0  \label{eq::probMRT::cons_CRB_theta} \\
			& &\quad\mathtt{CRLB}_{\mathtt{bf},\phi}(\bm{\gamma},\rho) \leq \mathtt{CRLB}_{\phi}^0  \label{eq::probMRT::cons_CRB_phi} \\
			& &\quad\Nt \left( \xibf^\T \bm{\gamma} + \rho \right) \leq P_{\mathtt{t}}. \label{eq::probMRT::cons_power_MRT}
		\end{IEEEeqnarray}
	\end{subequations}
    {Note that the proposed optimization in \eqref{ori_problem_MRTZF} is performed based on the assumption of Remark~\ref{rm_CRB_0} that the CRLB constraints in \eqref{eq::probMRT::cons_CRB_theta} and \eqref{eq::probMRT::cons_CRB_phi} are computed without any pointing error. Unlike the sensing performance, there should be little impact on the communication sum-rate objective of \eqref{eq::probMRT::obj_func_MRT} when a small pointing error exists, since it is the presence of the CRLB constraint and not the precise angles associated with the constraint that has the primary effect.}
	
	\subsection{Proposed Solution to \eqref{ori_problem_MRTZF}}
	The objective \eqref{eq::probMRT::obj_func_MRT} is a nonconcave function in $(\bm{\gamma},\rho)$, while constraints \eqref{eq::probMRT::cons_CRB_theta}, \eqref{eq::probMRT::cons_CRB_phi} and \eqref{eq::probMRT::cons_power_MRT}	are convex and linear, so problem \eqref{ori_problem_MRTZF} is non-convex. To tackle this problem, we apply SCA as described below.
	\subsubsection{Convexifying Objective Function \eqref{eq::probMRT::obj_func_MRT}} We begin with the convex approximation of \eqref{eq::probMRT::obj_func_MRT} using the inner approximation (IA) framework~\cite{Beck:JGO:10}. It is clear that both the numerator and denominator of \eqref{eq::probMRT::obj_func_MRT} are linear in $(\bm{\gamma},\rho)$. Thus, we adopt the following inequality~\cite{NasirTWC21}:
	\begin{align}\label{obj_func_BB_appro}
		\ln\Big(1+\frac{x}{y}\Big) \geq & \ln\Big(1+\frac{x^{(i)}}{y^{(i)}}\Big) + 2 \frac{x^{(i)}}{x^{(i)}+y^{(i)}} \nonumber\\
		& - \frac{(x^{(i)})^2}{x^{(i)}+y^{(i)}}\frac{1}{x} - \frac{x^{(i)}}{(x^{(i)}+y^{(i)})y^{(i)}}y,
	\end{align}
	where $x^{(i)}$ and $y^{(i)}$ are respectively feasible points for $x$ and $y$ at the $i$-th iteration of an algorithm presented shortly. Given ($x^{(i)}$, $y^{(i)}$), we can see that the right-hand side (RHS) of \eqref{obj_func_BB_appro} is a concave lower bound for $\ln\big(1+x/y\big)$. Therefore, with $\lambdabf_k$ in \eqref{eq_def_lambda_bf} and $\zetabf_k$ in \eqref{eq_def_zeta_bf}, ${R_{\mathtt{bf}}}_k(\bm{\gamma},\rho)$ in \eqref{eq_SE_theo} is lower bounded around the point ($\bm{\gamma}^{(i)}$, $\rho^{(i)}$) at iteration $i$ as
	\begin{align*}\label{eq_SE_theo_concave}
		{R_{\mathtt{bf}}^{(i)}}_k(\bm{\gamma},\rho) = & \frac{\bar{\tau}}{\ln2}\left[ {A_{\mathtt{bf}}^{(i)}}_k - \frac{{B_{\mathtt{bf}}^{(i)}}_k}{\lambdabf_k}\frac{1}{\gamma_k} \right. \nonumber\\
		&\quad \left. - {C_{\mathtt{bf}}^{(i)}}_k\Bigl(\Nt \beta_k \rho + \Nt {\bm{\zeta}_{\mathtt{bf}}^\T}_k \bm{\gamma} + \sigmac\Bigl)\right],
		\nbthis 
	\end{align*}
	where
	\begin{align}
		{A_{\mathtt{bf}}^{(i)}}_k &\triangleq \ln\left(1 + \frac{ \lambdabf_k \gamma_k^{(i)} }{\Nt \beta_k \rho^{(i)}  + \Nt {\bm{\zeta}_{\mathtt{bf}}^\T}_k \bm{\gamma}^{(i)} + \sigmac}\right) \nonumber\\
		&\quad + 2 \frac{\lambdabf_k \gamma_k^{(i)}}{\lambdabf_k \gamma_k^{(i)} + \Nt \beta_k \rho^{(i)}  + \Nt {\bm{\zeta}_{\mathtt{bf}}^\T}_k \bm{\gamma}^{(i)} + \sigmac},\nonumber\\
		{B_{\mathtt{bf}}^{(i)}}_k &\triangleq \frac{\big(\lambdabf_k \gamma_k^{(i)}\big)^2 }{\lambdabf_k \gamma_k^{(i)} + \Nt \beta_k \rho^{(i)}  + \Nt {\bm{\zeta}_{\mathtt{bf}}^\T}_k \bm{\gamma}^{(i)} + \sigmac},\nonumber\\
		{C_{\mathtt{bf}}^{(i)}}_k &\triangleq  \lambdabf_k \gamma_k^{(i)}\Big/\Big[\Big(\lambdabf_k \gamma_k^{(i)} + \Nt \beta_k \rho^{(i)}  + \Nt {\bm{\zeta}_{\mathtt{bf}}^\T}_k \bm{\gamma}^{(i)} \nonumber\\
		&\qquad + \sigmac\Big)\Big(\Nt \beta_k \rho^{(i)}+\Nt {\bm{\zeta}_{\mathtt{bf}}^\T}_k \bm{\gamma}^{(i)} + \sigmac\Big)\Big].
	\end{align}
We note that ${R_{\mathtt{bf}}}_k(\bm{\gamma}^{(i)},\rho^{(i)})={R_{\mathtt{bf}}^{(i)}}_k(\bm{\gamma}^{(i)},\rho^{(i)})$.
	
	\subsubsection{SOC Transformation of \eqref{eq::probMRT::cons_CRB_theta} and \eqref{eq::probMRT::cons_CRB_phi}} Towards an efficient optimization method, we will convert \eqref{eq::probMRT::cons_CRB_theta} and \eqref{eq::probMRT::cons_CRB_phi} into second-order cone (SOC) constraints. 
	First, we can rewrite \eqref{eq::probMRT::cons_CRB_theta} equivalently as
	\begin{align}\label{eq_cons_CRB_theta_eqv}
		&\frac{\Big(\xibf^\T \bm{\gamma} \big(\Nr \adottheta^\H \adotphi  + \Nt \bdottheta^\H \bdotphi\big) + \rho \Nt^2 \bdottheta^\H \bdotphi\Big)^2}{ \xibf^\T \bm{\gamma}\left( \Nr \norm{\adotphi}^2 + \Nt \normshort{\bdotphi}^2\right) + \rho \Nt^2 \normshort{\bdotphi}^2}\nonumber\\
		&\leq \xibf^\T \bm{\gamma} \Big( \Nr \norm{\adottheta}^2\! +\! \Nt \normshort{\bdottheta}^2\Big)\! +\! \rho \Nt^2 \normshort{\bdottheta}^2 \!- \! \frac{1}{\kappa \abs{\alpha}^2\mathtt{CRLB}_{\theta}^0}.
	\end{align}
	Define $\varphi_{\theta} \triangleq \xibf^\T \bm{\gamma} \Big( \Nr \norm{\adottheta}^2 + \Nt \normshort{\bdottheta}^2\Big) + \rho \Nt^2 \normshort{\bdottheta}^2$ and $\varpi_{\phi} \triangleq  \xibf^\T \bm{\gamma}\left( \Nr \norm{\adotphi}^2 + \Nt \normshort{\bdotphi}^2\right) + \rho \Nt^2 \normshort{\bdotphi}^2$, so that the SOC formulation of \eqref{eq_cons_CRB_theta_eqv} can be written as
	\begin{align}\label{eq_cons_CRB_theta_SOC}
		&\Big\|\Big[\xibf^\T \bm{\gamma} \big(\Nr \adottheta^\H \adotphi  + \Nt \bdottheta^\H \bdotphi\big) + \rho \Nt^2 \bdottheta^\H \bdotphi;\; 0.5\Big(\varpi_{\phi}-\varphi_{\theta}\nonumber\\
		&\!+\!\frac{1}{\kappa \abs{\alpha}^2\mathtt{CRLB}_{\theta}^0}\Big)\!\Big]\!\Big\|_2\! \leq \!0.5\Big(\varpi_{\phi}+\varphi_{\theta}-\frac{1}{\kappa \abs{\alpha}^2\mathtt{CRLB}_{\theta}^0}\Big),
	\end{align}
	which is convex. Similarly, we can rewrite \eqref{eq::probMRT::cons_CRB_phi} as
	\begin{align}\label{eq_cons_CRB_phi_SOC}
		&\Big\|\Big[\xibf^\T \bm{\gamma} \big(\Nr \adottheta^\H \adotphi  + \Nt \bdottheta^\H \bdotphi\big) + \rho \Nt^2 \bdottheta^\H \bdotphi;\; 0.5\Big(\varphi_{\theta} - \varpi_{\phi}\nonumber\\
		&\!+\!\frac{1}{\kappa \abs{\alpha}^2\mathtt{CRLB}_{\phi}^0}\Big)\Big]\Big\|_2 \leq 0.5\Big(\varphi_{\theta} \!+\! \varpi_{\phi}\!-\!\frac{1}{\kappa \abs{\alpha}^2\mathtt{CRLB}_{\phi}^0}\!\Big).
	\end{align}
	
	Thus, the approximate convex problem associated with \eqref{ori_problem_MRTZF} to be solved at iteration $i$ is given by
	\begin{subequations}\label{Convex_problem_MRTZF}
		\begin{IEEEeqnarray} {rcl}
		(\mathcal{P}_{\mathtt{bf}}^{\text{SOCP}}): \quad & \underset{\substack{ \bm{\gamma}, \rho }}{\textrm{maximize}}  & \quad  \sum_{k=1}^K {R_{\mathtt{bf}}^{(i)}}_k(\bm{\gamma},\rho) \label{eq_objectiveconvex} \\
			&\textrm{subject to} 
			&\quad \eqref{eq::probMRT::cons_power_MRT}, \eqref{eq_cons_CRB_theta_SOC}, \eqref{eq_cons_CRB_phi_SOC}.\qquad
		\end{IEEEeqnarray}
	\end{subequations}

	\begin{algorithm}[t]
		\begin{algorithmic}[1]
			\fontsize{10}{10}\selectfont
			\protect\caption{Proposed Iterative Algorithm for Solving  \eqref{ori_problem_MRTZF} with MRT or ZF-Based Transmission}
			\label{alg1}
			\global\long\def\algorithmicrequire{\textbf{Initialization:}}
			\REQUIRE  Set $i:=1$ and  randomly generate an initial feasible point for $(\bm{\gamma}^{(0)},\rho^{(0)})$.
			\REPEAT
			\STATE Solve  \eqref{Convex_problem_MRTZF} to obtain the optimal transmission power $(\bm{\gamma}^{\star},\rho^{\star})$;
			
			\STATE Update:\ \ $(\bm{\gamma}^{(i)},\rho^{(i)}) := (\bm{\gamma}^{\star},\rho^{\star})$;
			\STATE Set $i:=i+1$;
			\UNTIL Convergence\\
			\STATE{\textbf{Output:}} $(\bm{\gamma}^{\star},\rho^{\star})$.
	\end{algorithmic} \end{algorithm}

	\subsection{Overall Algorithm and Complexity Analysis}
	The proposed iterative procedure to solve \eqref{ori_problem_MRTZF} is summarized in Algorithm \ref{alg1}. At each iteration, the optimal solution of the convex program \eqref{Convex_problem_MRTZF} is considered as the feasible point for the next iteration. This procedure is repeated until convergence.
	
In the convex problem \eqref{Convex_problem_MRTZF}, only the objective \eqref{eq_objectiveconvex} is approximated using the IA framework. The convergence of Algorithm \ref{alg1} to a local optimum of the original problem \eqref{ori_problem_MRTZF} has been well studied in the literature~\cite{Beck:JGO:10,Dinh:TCOMM:2017,Dinh:JSAC:Dec2017}. In particular, it can be shown that ${R_{\mathtt{bf}}}_k(\bm{\gamma},\rho) \geq{R_{\mathtt{bf}}^{(i)}}_k(\bm{\gamma},\rho)$ and ${R_{\mathtt{bf}}}_k(\bm{\gamma}^{(i)},\rho^{(i)})={R_{\mathtt{bf}}^{(i)}}_k(\bm{\gamma}^{(i)},\rho^{(i)})$. Based on~\cite{Beck:JGO:10}, ${R_{\mathtt{bf}}}_k^{(i+1)}(\bm{\gamma},\rho) \geq {R_{\mathtt{bf}}^{(i)}}_k(\bm{\gamma},\rho)$, and the equality holds whenever $(\bm{\gamma}^{(i+1)},\rho^{(i+1)})\equiv(\bm{\gamma}^{(i)},\rho^{(i)})$. This implies that Algorithm \ref{alg1} will produce a sequence of non-decreasing objective values that converge to at least a local optimum for a sufficiently large number of iterations. Problem \eqref{Convex_problem_MRTZF} has only three SOC and linear constraints and $K+1$ scalar decision variables. As a result, the worst-case computational complexity per iteration of Algorithm \ref{alg1} is $\mathcal{O}(\sqrt{3}K^3)$ \cite[Chapter 6]{Ben:2001}.

	\section{Simulation Results}
	\label{sec_sim}
	\subsection{Simulation Setup}
	In this section, we provide numerical results to validate the theoretical findings and proposed design. We consider a scenario where the users have random locations that are uniformly distributed within a cell of radius of $1000$ meters (m) with the BS is at the center, except that no user is closer to the BS than $r_{\mathtt{h}}$ = $100$ m. Furthermore, we assume that the target is located $400$ m away from the BS with angles $(\theta,\phi) = \left(\frac{\pi}{8}, \frac{\pi}{4}\right)$. The large-scale fading parameters are computed as $\beta_k = z_k/(r_k/r_{\mathtt{h}})^{\nu}$, where $z_k$ is a log-normal random variable with standard deviation $\sigma_{\mathtt{shadow}} = 7$ dB, $r_k$ is the distance between the $k$-th user and the BS, and $\nu = 3.2$ is the path loss exponent~\cite{ngo2013energy}. For simplicity, we assume $\alpha = \frac{1}{\beta_{\mathtt{s}}\sqrt{2}}(1 + j)$, where $\beta_{\mathtt{s}}$ is the round-trip path loss between the BS and the target. 
 
    We set $L=30$, and define the SNR as $\mathrm{SNR} = P_{\mathtt{t}}$ {since the noise variances are normalized as $\sigma_c = \sigma_s = \noise = 1$ for all users  \cite{ngo2013energy}.} During the uplink training phase, we employ $\tau_{\mathtt{p}} = 10$, $\tau_{\mathtt{c}} = 100$, $\mathrm{SNR}  = 30$ dB, and mutually orthogonal pilot sequences with a pilot reuse pattern~\cite{van2021reconfigurable}
    \begin{align*}
        \vp_k^\H \vp_j = 
        \begin{cases}
            1, \text{~if~} j \in  \mathcal{P}_k\\
            0, \text{~otherwise},
        \end{cases}
    \end{align*}
    where $\mathcal{P}_k$ is the set of user indices (including user $k$) that share the same pilot sequence as user $k$. For the array response vectors in \eqref{eq_ath} and \eqref{eq_atv}, we choose $\Nth = \Ntv = \sqrt{\Nt} \in \mathbb{N}$ and $\Nrh = \Nrv = \sqrt{\Nr} \in \mathbb{N}$.

	We employ CVX to solve problem $\mathcal{P}_{\mathtt{bf}}^{\text{SOCP}}$ in \eqref{Convex_problem_MRTZF} for the proposed power allocation scheme. For comparison, we consider four benchmark approaches as follows:
 
    $\bullet$ Equal power fractions among communications users (\textit{EqualCom}): In this scheme, only the power allocation between communications and sensing is performed, while the fraction of power assigned to the all communications users is equal. This solution is obtained by solving problem \eqref{ori_problem_MRTZF} with the additional constraint $\gamma_1 = \gamma_2 = \ldots = \gamma_K$.  
    
    $\bullet$ Equal power allocation between communications and sensing as well as among communications users (\textit{EqualC\&S}): In this scheme, the available power budget is first shared equally between communications and sensing, each with $\Pt/2$, and then we set $\gamma_1 = \gamma_2 = \ldots = \gamma_K$. From \eqref{eq_power}, we can determine $\left(\rho,\gamma_k\right) = \left( \frac{\Pt}{2\Nt},  \frac{\Pt}{2\Nt \sum_{k=1}^K {\xi_{\mathtt{bf}}}_k} \right)$. This setting achieves the equality in \eqref{eq::probMRT::cons_power_MRT}, i.e., all the power budget is utilized for the joint transmission.

    Simulation results for the rates and CRLB of these benchmarks are obtained for both the MRT and ZF schemes, i.e., with $\mathtt{bf} = \{\mathtt{MRT},\mathtt{ZF}\}$ in \eqref{eq_SE_theo}, \eqref{eq_CRB_theta_1}, and \eqref{eq_CRB_phi_1}. As a result, we have four benchmarks in our simulations: \textit{MRT/ZF-EqualCom} and \textit{MRT/ZF-EqualC\&S}. Unless otherwise stated, the subsequent simulation results are obtained by averaging over $10$ sets of large-scale fading parameters and $100$ small-scale channel realizations. {In the figures, we show the CRLB in dB as follows: $\mathtt{CRLB~[\mathrm{dB}]} = 10 \log_{10}(\mathtt{CRLB~[\mathrm{rad}^2]})$.}

	\subsection{Convergence of Algorithm \ref{alg1}}
	\begin{figure}[t]%
		\vspace{-0.5cm}\centering
		\includegraphics[scale=0.52]{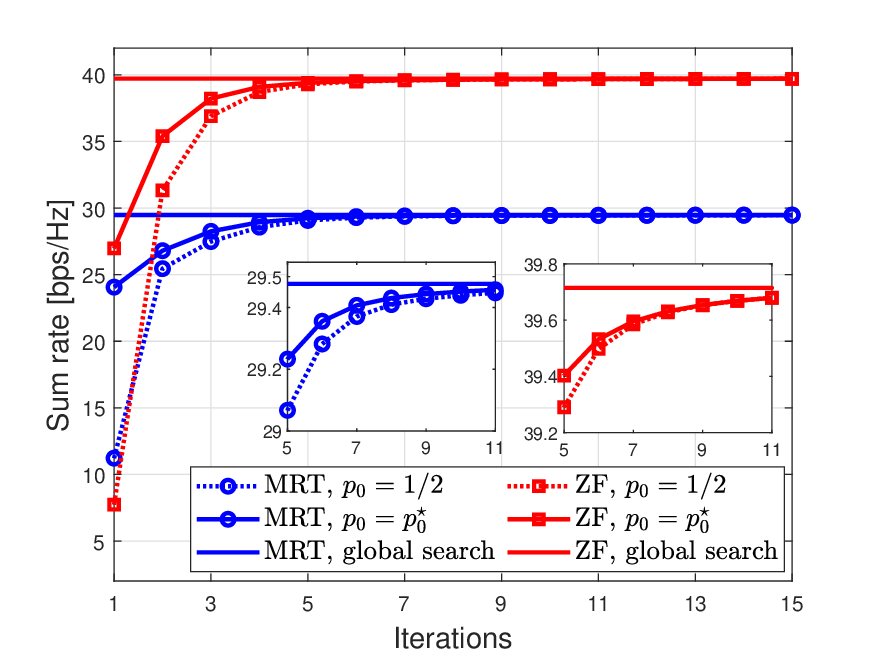}
		\caption[]{{Convergence of Algorithm \ref{alg1} with $\Nt=225$, $\Nr=25$, $K=12$, $L=30$, $\mathtt{CRLB}_{\theta}^0 = \mathtt{CRLB}_{\phi}^0 = -35$ dB, $\mathrm{SNR} =10$ dB. The objective value of the global search method is achieved after $1000$ s with $14$ parallel solvers.}}%
		\label{fig_conv}%
	\end{figure}

    \renewcommand{\arraystretch}{1.0}

    \begin{figure}[t]
            \vspace{-0.5cm}
        \centering
        \includegraphics[scale=0.52]{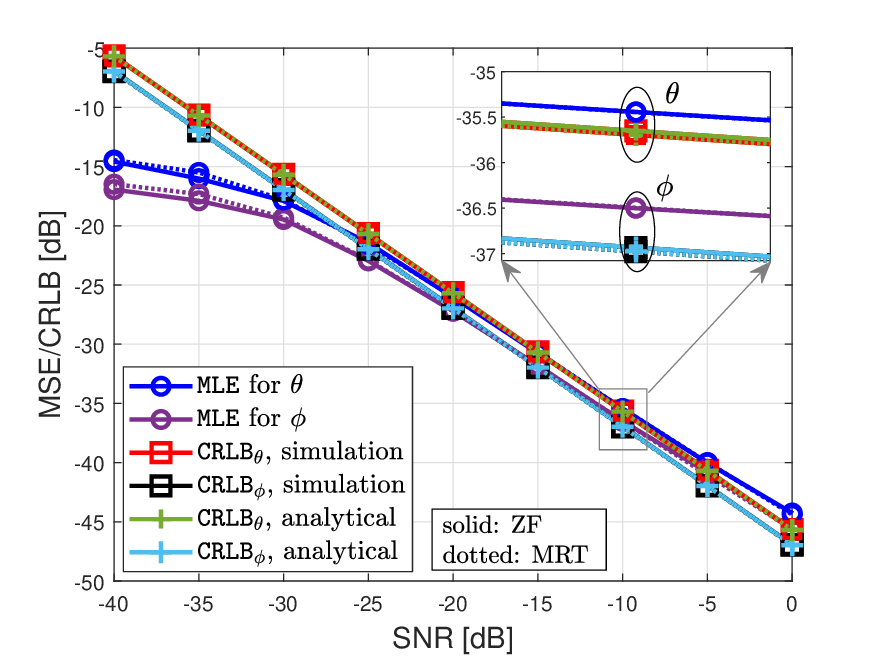}
        \caption{{CRLB and MSE of the MLE for $\theta$ and $\phi$ with $\Nt=25$, $\Nr=25$, $K=8$, $L=30$. The step size for the MLE grid search is $\pi/256$.}}
        \label{fig_CRB}
    \end{figure}
    
	In Fig.\ \ref{fig_conv} we show the convergence of Algorithm \ref{alg1}. We consider both MRT and ZF with $N = 250$, $\Nt = 225$ and $\Nr = N - \Nt = 25$, $K=12$, $\mathtt{CRLB}_{\theta}^0 = \mathtt{CRLB}_{\phi}^0 = -35$ dB, and $\mathrm{SNR} = 10$ dB. The algorithm is initialized with $\left(\rho^{(0)},\gamma_k^{(0)}\right) = \left( \frac{p_0 \Pt}{\Nt},  \frac{(1-p_0)\Pt}{\Nt \sum_{k=1}^K {\xi_{\mathtt{bf}}}_k} \right)$, which satisfies the power constraint \eqref{eq::probMRT::cons_power_MRT}. In Fig.\ \ref{fig_conv}, we show the convergence for $p_0 = \frac{1}{2}$ and that obtained via a search to find the smallest $p_0$, denoted as $p_0^{\star}$, satisfying \eqref{eq::probMRT::cons_CRB_theta} and \eqref{eq::probMRT::cons_CRB_phi}. Note that with $p_0 = \frac{1}{2}$, the initial solution is the same as for EqualC\&S, which meets the CRLB constraints with a high probability because of the large power fraction allocated for sensing. We see from Fig.\ \ref{fig_conv} that for both initialization methods, Algorithm \ref{alg1} converges after a few iterations and achieves similar objective values at convergence.  However, the initialization with $p_0 = p_0^{\star}$ offers much better initial values and faster convergence for both MRT and ZF, compared to that obtained by fixing $p_0 = \frac{1}{2}$. This is because in this case, the smallest power fraction is allocated for sensing, while the remaining power is used for communications. Note that searching for $p_0^{\star}$ requires very low complexity because the cost of computing the CRLBs every time $(\bm{\gamma}, \rho)$ changes is only $\mathcal{O}(K)$. 
    
    {In Fig.\ \ref{fig_conv}, we compare the sum rates achieved by Algorithm~\ref{alg1} to those obtained using MATLAB's built-in global search method. We employ $14$ local solvers running in parallel for $1000$ seconds (s) for the global search, ensuring a high probability of identifying the global optimum of problem \eqref{ori_problem_MRTZF}. As shown in Fig.\ \ref{fig_conv}, the proposed scheme achieves sum rates that closely approximate those of the global search method with significantly reduced runtime. Specifically, Algorithm~\ref{alg1} completes $15$ iterations in just $5.90$ s and $7.02$ s for MRT and ZF precoding, respectively.}
    
 

	\subsection{Communications and Sensing Performance }

	In this section, we demonstrate the communications and sensing performance of MRT and ZF precoding with the proposed power allocation. We first verify the theoretical results for the CRLB in Theorems \ref{theo_SE} and Remark \ref{rm_CRB_0}. In Fig.\ \ref{fig_CRB}, we compare the CRLB for $\theta$ and $\phi$ obtained by Monte-Carlo simulations based on \eqref{eq_CRB_theta_0}, \eqref{eq_CRB_phi_0}, and the analytical results in Remark \ref{rm_CRB_0}.  We also show the MSE of the MLE in \eqref{eq_MLE} for $\theta$ and $\phi$, which performs an exhaustive search over a fine grid. The sensing SNR is defined as $\frac{\Pt L \abs{\alpha}^2}{\sigmas}$~\cite{liu2021cramer}. It is observed for both MRT and ZF that the CRLB is a tight lower bound for the MSE of MLE. Therefore, the power allocation constraint on the CRLB threshold can guarantee the sensing performance in terms of the angle estimation accuracy.
	
	\begin{figure*}[!htb]
            \vspace{-0.5cm}
		\hspace{-0.5cm}
		\subfigure[Communications performance]
		{
			\includegraphics[scale=0.45]{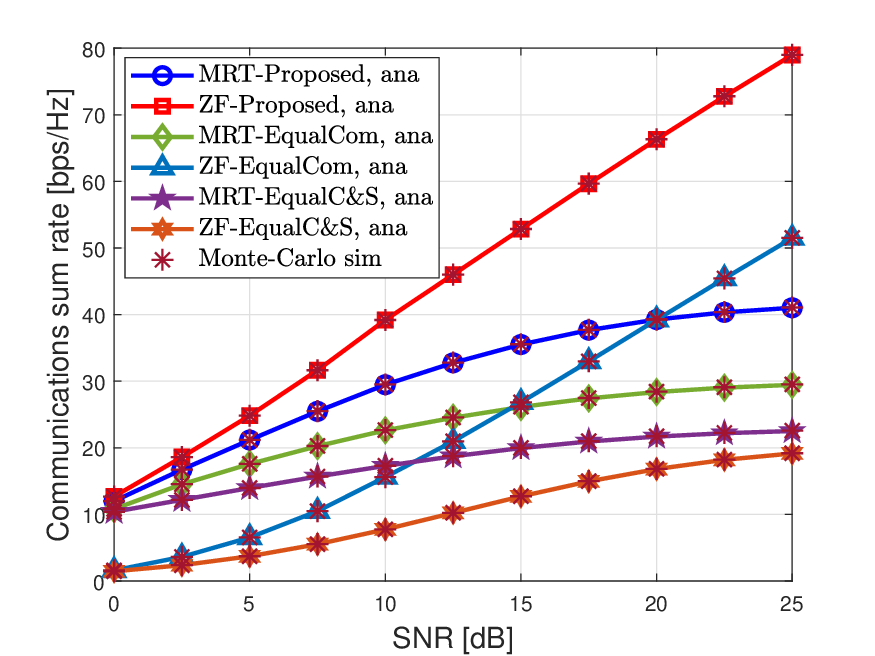}
			\label{fig_rate_SNR}
		}\hspace{-0.8cm}
		\subfigure[Sensing performance]
		{
			\includegraphics[scale=0.45]{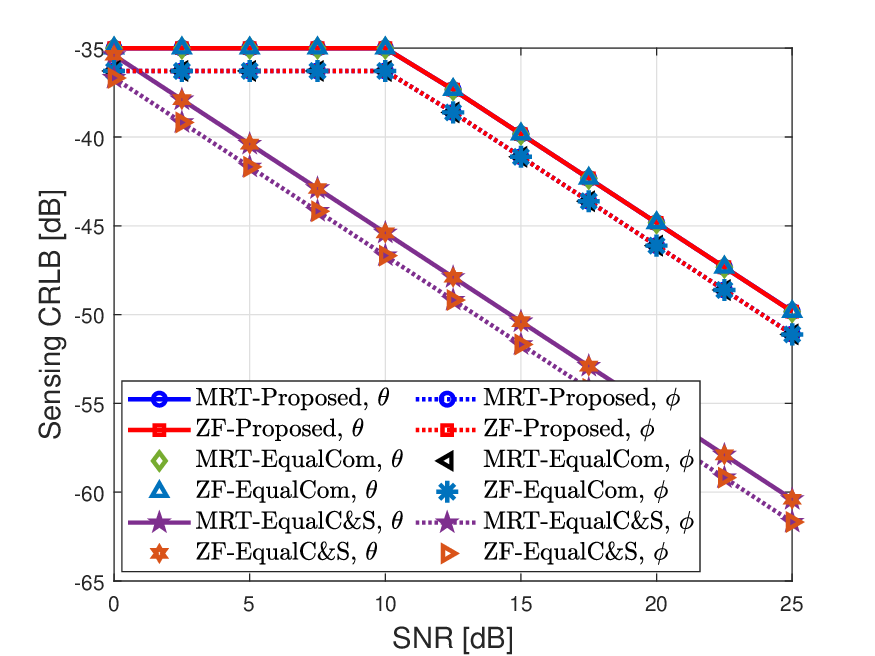}
			\label{fig_CRB_SNR}
		}\hspace{-0.8cm}
		\subfigure[Communications-sensing tradeoff]
		{
			\includegraphics[scale=0.45]{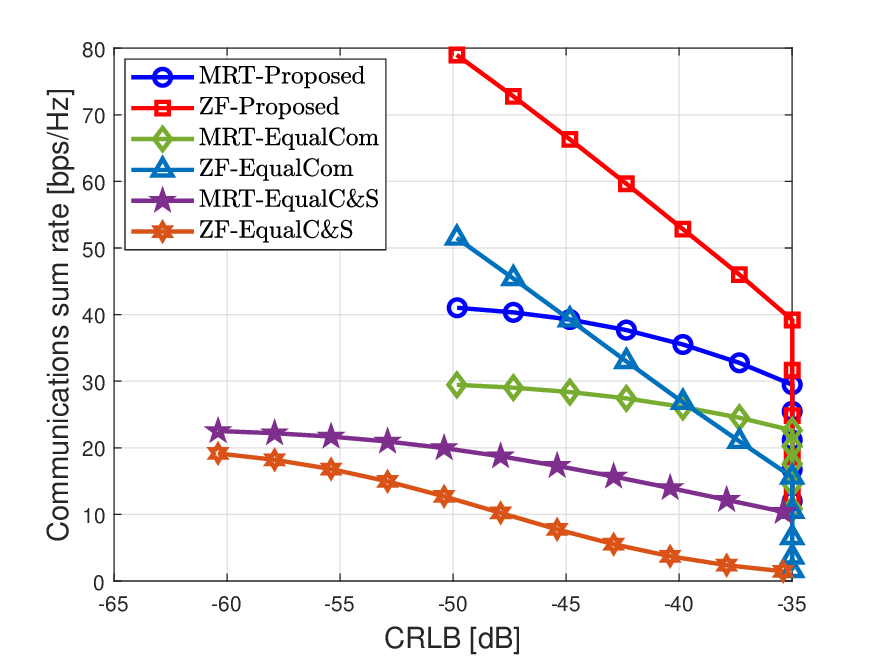}
			\label{fig_tradeoff_SNR}
		}
		\caption{Communications sum rate and sensing CRLBs of MRT and ZF with $\Nt=225$, $\Nr=25$, $K=12$, $L=30$, and $\mathtt{CRLB}_{\theta}^0 = \mathtt{CRLB}_{\phi}^0 = -35$ dB.}
		\label{fig_perf_SNR}
	\end{figure*}
	
	\begin{figure}[t]%
		\vspace{-0.5cm}\centering
		\includegraphics[scale=0.52]{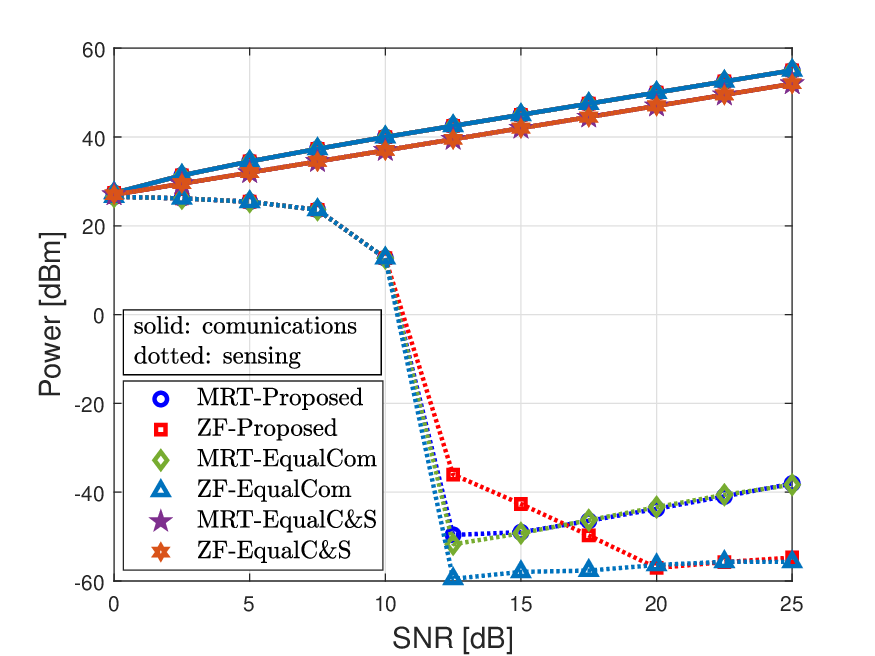}
		\caption[]{Communications and sensing power for the results in Fig.\ \ref{fig_perf_SNR}.}%
		\label{fig_power_SNR}%
	\end{figure}

	In Fig.\ \ref{fig_perf_SNR}, we show the achievable sum rates and the CRLB attained by the proposed power allocation scheme compared with the benchmarks. We set $N = 250$, $\Nt = 225$, $\Nr = 25$, $K=12$, $\mathtt{CRLB}_{\theta}^0 = \mathtt{CRLB}_{\phi}^0 = -35$ dB. We note the following observations:
	\begin{itemize}
		\item The sum rates shown in Fig.\ \ref{fig_rate_SNR} are obtained based on both the closed form expressions in Theorem \ref{theo_SE} and Monte-Carlo simulations. The analytical and simulation results match well for all considered scenarios, validating our theoretical analysis 
        in~Theorem \ref{theo_SE}.

		\item In Fig.\ \ref{fig_rate_SNR}, ZF-Proposed offers the highest sum rate among the ZF-based approaches, followed by ZF-EqualCom and EqualC\&S. For example, at $\mathrm{SNR} =20$ dB, the rate for ZF-Proposed is $294\%$ and $68.9\%$ higher than that of ZF-EqualC\&S and ZF-EqualCom, respectively. ZF-EqualCom with only power allocation between communications and sensing performs $133.3\%$ better than EqualC\&S. These observations demonstrate the significant performance gain from the proposed power allocation method. In addition, the MRT-Proposed algorithm also significantly outperforms the other MRT-based counterparts, although the gap is not as great as for ZF.
  
        \item We see from Fig.\ \ref{fig_CRB_SNR} that the EqualC\&S methods achieve the lowest CRLB because they use the most power for sensing (equal to the amount for communications). Note that the larger sensing power leads to a larger $\Nt \beta_k \rho$ in \eqref{eq_SE_theo}. Thus, EqualC\&S exhibits a significant communications performance loss compared with the Proposed and EqualCom approaches, as discussed in Remark \ref{rm_effect_of_sensing}. ZF-Proposed and ZF-EqualCom have identical CRLBs because they share the same values for $\bm{\xi}^\T \bm{\gamma}$ and $\rho$, as noted in Remark \ref{rm_CRB_2}. Similar observations and discussions can be made for the MRT-based schemes. 
		
		
		\item As seen in Fig.\ \ref{fig_tradeoff_SNR}, ZF-Proposed achieves the best communications-sensing performance tradeoff, far better than MRT-based methods. The EqualC\&S approach yields a very poor tradeoffs although it uses all the power budget. This emphasizes the significance of power allocation in the considered system.
	\end{itemize}

    {In Fig.~\ref{fig_power_SNR}, we illustrate the power allocated to the communications and sensing subsystems, represented by the first and second terms in \eqref{eq_power}, respectively. These results are obtained from the simulation corresponding to Fig.~\ref{fig_perf_SNR}. At low SNRs, a significant amount of power is required for sensing to satisfy the CRLB constraints. As the SNR increases, more power is allocated to communications to maximize the sum rate. Simultaneously, the communications power also contributes to the sensing performance, leading to a decrease in the CRLB, as observed in \eqref{eq_CRB_theta_1}, \eqref{eq_CRB_phi_1}, and Fig.~\ref{fig_CRB_SNR}. Consequently, as the SNR increases, the sensing power decreases significantly. When the communications power becomes sufficiently large, only a minimal amount of power is allocated to sensing, as observed for SNR $> 10$~dB in Fig.~\ref{fig_power_SNR}. It is noted that the sensing power slightly increases in this SNR regime, but the power values are extremely small, i.e., below $-40$~dBm, and are entirely dominated by the communications power. As a result, the effect of sensing power on the communications and sensing functions in this regime is negligible.
}
        
	In Fig.\ \ref{fig_perf_CRB0}, we show the communications sum rate and power versus the CRLB thresholds. We consider $\mathtt{CRLB}_{\theta}^0 = \mathtt{CRLB}_{\phi}^0 \in [-55,-20]$ dB and $\mathrm{SNR} =20$ dB; the other simulation parameters are the same as those in Fig.\ \ref{fig_perf_SNR}. The sum rates of the proposed and EqualCom schemes increase with the CRLB threshold in the range $[-55,-45]$ dB. Beyond that they remain nearly unchanged. This is reasonable and aligns well with the power allocation in Fig.\ \ref{fig_power_CRB0}. More specifically, the CRLB constraints \eqref{eq::probMRT::cons_CRB_theta} and \eqref{eq::probMRT::cons_CRB_phi} are easier to satisfy when the threshold increases. Therefore, the power allocated for sensing decreases quickly to allow more for communications to maximize the sum rate. Note that the sum rates of the two EqualC\&S schemes remain constant because the communications and sensing power are fixed and independent of the CRLB thresholds.

    \begin{figure}[t]
            \vspace{-0.5cm}
		\centering
		\subfigure[Communications performance]
		{
            \includegraphics[scale=0.52]{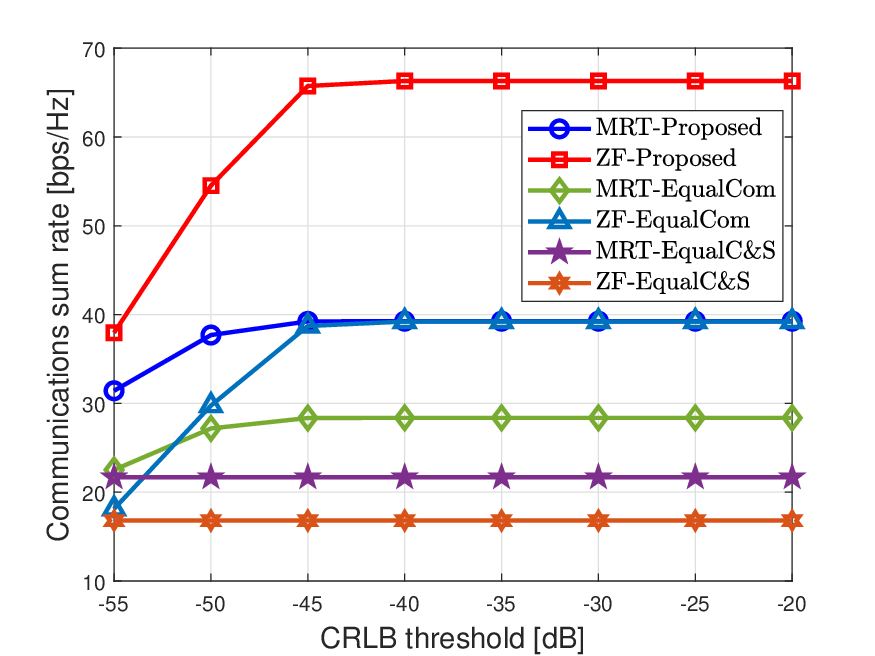}\vspace{-0.35cm}
			\label{fig_rate_CRB0t}
		}\vspace{-0.35cm}
		\subfigure[Power for communications and sensing]
		{
			\includegraphics[scale=0.52]{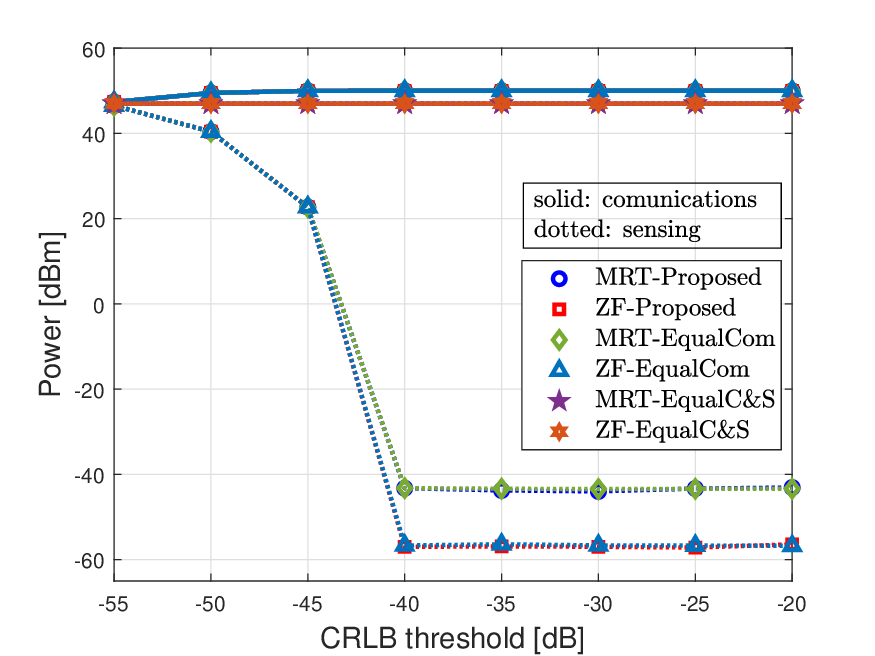}
			\label{fig_power_CRB0}
		}
		\caption{Sum rate and power with  $\mathtt{CRLB}_{\theta}^0 = \mathtt{CRLB}_{\phi}^0 = [-55,-20]$ dB $\Nt=225$, $\Nr=25$, $K=12$, $L=30$, $\mathrm{SNR} = 20$ dB.}
		\label{fig_perf_CRB0}
	\end{figure}
    
	\begin{figure}[t]
		\vspace{-0.5cm}\centering
		\subfigure[Communications performance]
		{
			\includegraphics[scale=0.52]{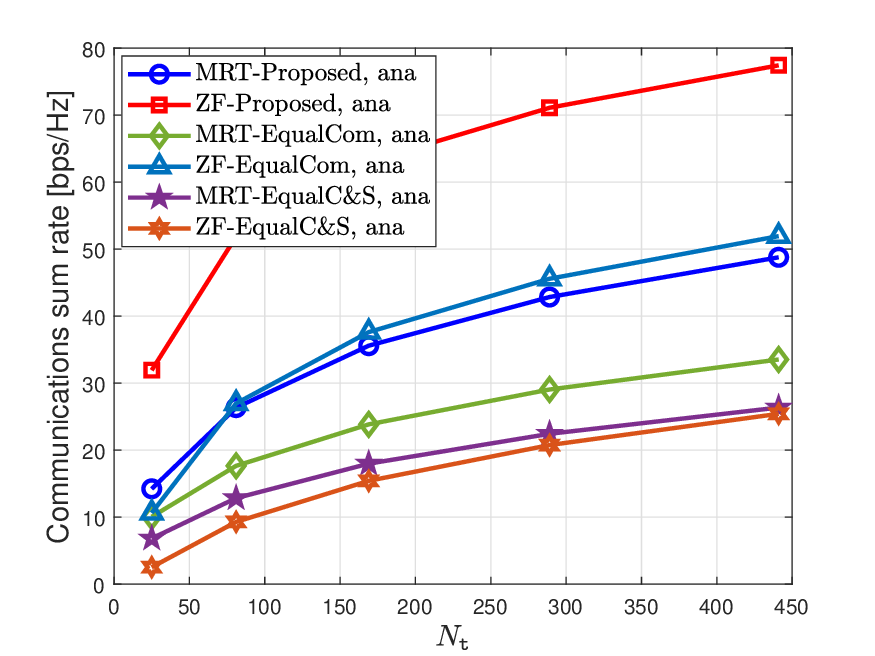}\vspace{-0.35cm}
			\label{fig_rate_Nt}
		}\vspace{-0.35cm}
		\subfigure[Sensing performance]
		{
			\includegraphics[scale=0.52]{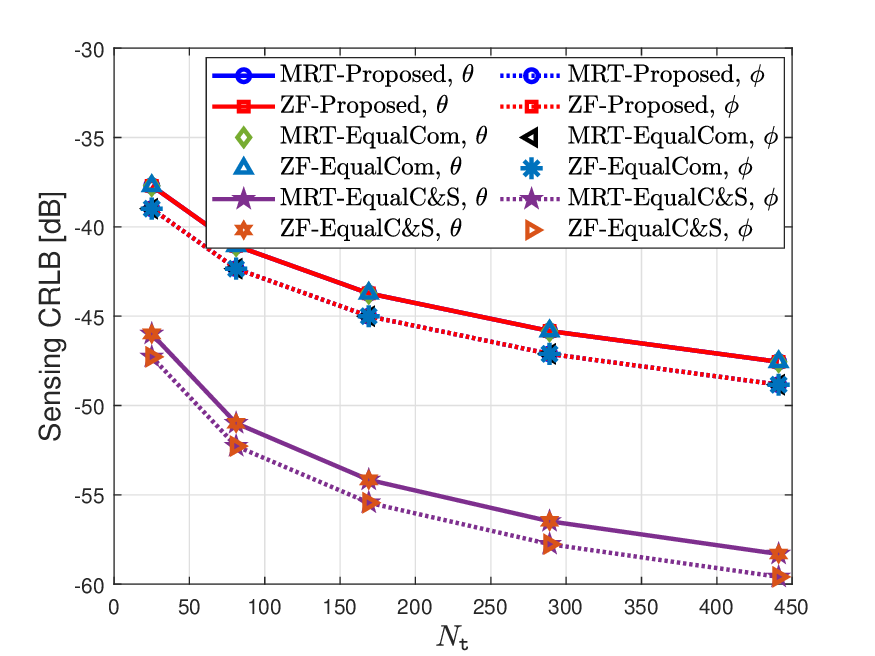}
			\label{fig_CRB_Nt}
		}
		\caption{Communications sum rate and sensing CRLB for $\Nr=25$, $K=12$, $L=30$, $\mathtt{CRLB}_{\theta}^0 = \mathtt{CRLB}_{\phi}^0 = -35$ dB, and $\mathrm{SNR} =20$ dB.}
		\label{fig_perf_Nt}
	\end{figure}

	Finally, we show the communications sum rates and sensing CRLBs when increasing $\Nt$, while the other parameters are set the same as those in Fig.\ \ref{fig_perf_SNR}. Fig.\ \ref{fig_rate_Nt} shows that as $\Nt$ increases, the sum rates of all approaches increase significantly. This is because the inter-user interference is mitigated with large $\Nt$, as discussed in Remark \ref{rm_free_interference}. The increase in the sum rate of ZF-EqualC\&S is slower than that for ZF-Proposed and ZF-EqualCom since the term $\Nt \beta_k \rho$ also increases with $\Nt$, which affects the communications performance of ZF-EqualC\&S without proper power allocation. In terms of sensing performance, the CRLBs of all considered approaches decrease with $\Nt$ and tend to zero as $\Nt \rightarrow \infty$. This verifies the conclusion in Remark \ref{rm_CRB_1}.
	

	\section{Conclusion}
	\label{sec_conclusion}
	We have investigated mono-static multiuser massive MIMO ISAC systems with linear MRT and ZF precoding. To characterize the system performance and operational properties, we derived closed form expressions for the achievable communications rate and sensing CRLBs. The analytical findings reveal important properties about ISAC operations in massive MIMO scenarios, such as the mutual interference among the subsystems and the impact of the large number of antennas on communications and sensing performance. We proposed an algorithm for power allocation among the precoders for the communications users and the sensed target to maximize the users' sum rate with a constraint on the CRLB and power budget. Our theoretical findings and proposed algorithm were verified by extensive numerical results, which show superior communications and sensing performance in massive MIMO ISAC systems. 

	\appendices

        \section{Proof of Lemma \ref{lemma_power}}
	\label{appd_power}
	With MRT, the total transmit power is computed as
	\begin{align*}
		&P_{\mathtt{bf}}^{\mathtt{MRT}} = \mean{\tr{\mF_{\mathtt{MRT}} \mF_{\mathtt{MRT}}^\H}} = \tr{\mean{\mF_{\mathtt{MRT}} \mF_{\mathtt{MRT}}^\H}} \\
		& = \text{trace}\left( \mathbb{E} \left\{ \left(\hat{\mH} \bm{\Gamma} + {\vv} \bar{\bm{\eta}}^\T\right) \left(\hat{\mH} \bm{\Gamma} + {\vv} \bar{\bm{\eta}}^\T\right)^\H \right\} \right)\\
		&= \underbrace{\mean{\tr{\hat{\mH} \bm{\Gamma} \bm{\Gamma}^\H \hat{\mH}^\H}}}_{\triangleq \varepsilon_{\mathtt{MRT}}^{\mathtt{tot}}} + \underbrace{\mean{\tr{{\vv} \bar{\bm{\eta}}^\T \bar{\bm{\eta}} {\vv}^\H }}}_{\triangleq \varepsilon_0^{\mathtt{tot}}}. \nbthis \label{eq_mean_01}
	\end{align*}
	First, note that $ \bm{\Gamma}$ is a diagonal matrix. We have
	\begin{align*}
		\varepsilon_{\mathtt{MRT}}^{\mathtt{tot}} &= \mean{\tr{\bm{\Gamma}^2 \hat{\mH}^\H \hat{\mH}}} = \Nt \bm{\xi}_{\mathtt{MRT}}^\T \bm{\gamma}, \nbthis \label{eq_mean_10}
	\end{align*}
	where the last equality follows from the fact that the mean of the $k$-th diagonal element of $\hat{\mH}^\H \hat{\mH}$ is equal to $\Nt \xi_k$. Furthermore, recall that ${\vv}^\H {\vv} = \Nt$, we have
	\begin{align*}
		\varepsilon_0^{\mathtt{tot}} 
            = \mean{\tr{\bar{\bm{\eta}}^\T \bar{\bm{\eta}}}} = \Nt \sum\nolimits_{k=1}^K \eta_k = \Nt \rho. \nbthis \label{eq_mean_21}
	\end{align*}
	From \eqref{eq_mean_01}--\eqref{eq_mean_21}, the total transmit power constraint in the case of MRT can be given as \eqref{eq_power} with $\xibf = \bm{\xi}_{\mathtt{MRT}}$.

	With ZF precoding, the total transmit power is computed as
	\begin{align*}
		&P_{\mathtt{bf}}^{\mathtt{ZF}} = \mean{\tr{\mF_{\mathtt{ZF}} \mF_{\mathtt{ZF}}^\H}} = \tr{\mean{\mF_{\mathtt{ZF}} \mF_{\mathtt{ZF}}^\H}} \\
		& = \text{trace}\left( \mathbb{E} \left\{ \left(\hat{\mH}^{\dagger} \bm{\Gamma} + {\vv} \bar{\bm{\eta}}^\T\right) \left(\hat{\mH}^{\dagger} \bm{\Gamma} + {\vv} \bar{\bm{\eta}}^\T\right)^\H \right\} \right)\\
		&= \underbrace{\mean{\tr{\bm{\Gamma} \bm{\Gamma}^\H (\hat{\mH}^{\dagger})^\H \hat{\mH}^{\dagger} }}}_{\triangleq \varepsilon_{\mathtt{ZF}}^{\mathtt{tot}}}  + \underbrace{\mean{\tr{{\vv} \bar{\bm{\eta}}^\T \bar{\bm{\eta}} {\vv}^\H }}}_{= \varepsilon_0^{\mathtt{tot}}}. \nbthis \label{eq_mean_ZF}
	\end{align*}
    Note that $ \bm{\Gamma}$ is a diagonal matrix of real entries, which are independent of the small-scale fading channel coefficients. Furthermore, we can express $\hat{\mH}$ as $\hat{\mH} = \mZ \bm{\Xi}^{\frac{1}{2}}$, where $\mZ = [\vz_1, \ldots, \vz_k]$, $\vz_k \sim \mathcal{CN}(\bm{0}, \mI_{\Nt})$, and $\bm{\Xi} = \diag{\xi_1, \ldots, \xi_K}$.  We can expand
	\begin{align*}
		&\varepsilon_{\mathtt{ZF}}^{\mathtt{tot}} = \mean{\tr{\bm{\Gamma}^2 (\hat{\mH}^\H \hat{\mH})^{-1}}} \\
		&=\! \mean{\tr{\bm{\Gamma}^2 \bm{\Xi}^{-1} (\mZ^\H \mZ)^{-1}}} 
		\!=\! \sum_{k=1}^{K} \frac{\gamma_k}{\xi_k} \mean{\left[ (\hat{\mZ}^\H \hat{\mZ})^{-1} \right]_{kk}}\\
		& =\! \sum_{k=1}^{K}\!\! \frac{\gamma_k}{K\xi_k} \mean{\tr{(\hat{\mZ}^\H \hat{\mZ})^{-1}}} \! = \!\sum_{k=1}^{K} \frac{\gamma_k}{(\Nt-K)\xi_k}, \nbthis \label{eq_e_tot_ZF}
	\end{align*}
	where we have used the property $\meanshort{\trshort{(\hat{\mZ}^\H \hat{\mZ})^{-1}}} = \frac{K}{\Nt-K}$, since $\hat{\mZ}^\H \hat{\mZ}$ is a central complex Wishart matrix~\cite{yang2013performance}. From \eqref{eq_mean_21}--\eqref{eq_e_tot_ZF}, the total power constraint with the ZF precoder is given as \eqref{eq_power} where $\xibf = \bm{\xi}_{\mathtt{ZF}}$ as defined in \eqref{eq_def_xi_bf}.

	\section{Proof of Theorem \ref{theo_SE}}
	\label{appd_SE}
	
	We first rewrite the terms in \eqref{eq_ykl_MRT} as $\mathtt{DS}_k = \mean{ \vh_k^\H {\vf_{\mathtt{bf}}}_k}, \mathtt{BU}_k = \vh_k^\H {\vf_{\mathtt{bf}}}_k - \mean{ \vh_k^\H {\vf_{\mathtt{bf}}}_k}, \mathtt{UI}_{kj} = \vh_k^\H  {\vf_{\mathtt{bf}}}_j$,
	where $\mathtt{bf} \in \{\mathtt{MRT}, \mathtt{ZF} \}$. Here,
	\begin{align*}
		&\vh_k^\H {\vf_{\mathtt{bf}}}_i = (\hat{\vh}_k + \ve_k)^\H \left(\sqrt{\gamma_i} {\vw_{\mathtt{bf}}}_i + \sqrt{\eta_i} {\vv}\right)\\
		&= \sqrt{\gamma_i} \hat{\vh}_k^\H {\vw_{\mathtt{bf}}}_i + \sqrt{\eta_i}  \hat{\vh}_k^\H {\vv}  + \ve_k^\H (\sqrt{\gamma_i} {\vw_{\mathtt{bf}}}_i + \sqrt{\eta_i} {\vv}), \nbthis \label{eq_hkfmrti}
	\end{align*}
	for $k,i = 1, \ldots, K$, and ${\vw_{\mathtt{bf}}}_i$ is the $i$-th column of $\mW$ in \eqref{eq_W_linear}. Note that
	\begin{align*}
		\mean{\sqrt{\eta_i}  \hat{\vh}_k^\H {\vv}  + \ve_k^\H (\sqrt{\gamma_i} {\vw_{\mathtt{bf}}}_i + \sqrt{\eta_i} {\vv})} = 0, \forall k, i \nbthis \label{eq_mean_0}
	\end{align*}
	as $\ve_k$ and ${\vw_{\mathtt{bf}}}_i$ are independent  and they both have zero means.
	To derive the closed form SE expressions, in the following we compute $\abs{\mathtt{DS}_k}^2$, $\meanshort{\abs{\mathtt{BU}_k}^2}$, and $\meanshort{\abs{\mathtt{UI}_{kj}}^2}$.

        \vspace{-0.25cm}
	\subsection{MRT Beamforming}
	
	With MRT beamforming, we have ${\vw_{\mathtt{bf}}}_i = \hat{\vh}_i$.
	
	\subsubsection{Computation of $\abs{\mathtt{DS}_k}^2$}
	Using \eqref{eq_hkfmrti} and \eqref{eq_mean_0}, we have
	\begin{align*}
		\abs{\mathtt{DS}_k}^2 &= \gamma_k \abs{\mean{\normshort{\hat{\vh}_k}^2}}^2 = \Nt^2 \xi_k^2 \gamma_k . \nbthis \label{eq_abs_DSk2_MRT}
	\end{align*}
	
	\subsubsection{Computation of $\meanshort{\abs{\mathtt{BU}_k}^2}$}
	We have
	\begin{align*}
		\mean{\abs{\mathtt{BU}_k}^2} &= \mean{\abs{\vh_k^\H {\vf_{\mathtt{MRT}}}_k - \mean{ \vh_k^\H {\vf_{\mathtt{MRT}}}_k}}^2}\\
		&= \underbrace{\mean{\abs{\vh_k^\H {\vf_{\mathtt{MRT}}}_k}^2}}_{\triangleq E_0} - \underbrace{\abs{\mean{ \vh_k^\H {\vf_{\mathtt{MRT}}}_k}}^2}_{= \abs{\mathtt{DS}_k}^2}.\nbthis \label{eq_mean_abs_BUk2_MRT}
	\end{align*}
	Here, $E_0$ can be computed as:
	\begin{align*}
		E_0 &= \mean{\abs{\sqrt{\gamma_k} \normshort{\hat{\vh}_k}^2 + \sqrt{\eta_k}  \hat{\vh}_k^\H {\vv}  + \ve_k^\H (\sqrt{\gamma_k} \hat{\vh}_k + \sqrt{\eta_k} {\vv})}^2}\\
		&= \gamma_k \mean{\normshort{\hat{\vh}_k}^4} + \eta_k \mean{\absshort{\hat{\vh}_k^\H {\vv}}^2} \\
		&\hspace{2cm} + \mean{\absshort{\ve_k^\H (\sqrt{\gamma_k} \hat{\vh}_k + \sqrt{\eta_k} {\vv})}^2}. \nbthis \label{eq_E000}
	\end{align*}
	Note that with $\vv = \va(\tilde{\theta}, \tilde{\phi})$, the elements of ${\vv}$ are deterministic with unit modulus, and $\hat{\vh}_k \sim \mathcal{CN}(0, \xi_k \mI_{\Nt})$. Thus, 
	\begin{align*}
		&\mean{\normshort{\hat{\vh}_k}^4} = \Nt (\Nt+1) \xi_k^2, \nbthis \label{eq_E01}\\
		&\mean{\absshort{\hat{\vh}_k^\H {\vv}}^2} = \mean{\normshort{\hat{\vh}_k}^2} = \Nt \xi_k, \nbthis \label{eq_E02}\\
		&\mean{\absshort{\ve_k^\H (\sqrt{\gamma_k} \hat{\vh}_k + \sqrt{\eta_k} {\vv})}^2} 
        = \Nt \epsilon_k \left(\xi_k \gamma_k + \eta_k \right). \nbthis \label{eq_E03}
	\end{align*}
    As a result, $E_0$ in \eqref{eq_mean_abs_BUk2_MRT} can be obtained as
    \begin{align*}
		E_0 
        &= \Nt^2 \xi_k^2 \gamma_k + \Nt \beta_k \left( \xi_k \gamma_k + \eta_k \right), \nbthis \label{eq_E0001}
	\end{align*}
	where we have used $\xi_k + \epsilon_k = \beta_k$. From \eqref{eq_abs_DSk2_MRT}, \eqref{eq_mean_abs_BUk2_MRT}, and \eqref{eq_E0001}, we obtain
	\begin{align*}
		&\mean{\abs{\mathtt{BU}_k}^2} = \Nt \beta_k  \left( \xi_k \gamma_k + \eta_k \right). \nbthis \label{eq_mean_abs_BUk2_MRT1}
	\end{align*}
	
	\subsubsection{Computation of $\meanshort{\abs{\mathtt{UI}_{kj}}^2}$}
	Using the results in \eqref{eq_E02} and \eqref{eq_E03}, we have
	\begin{align*}
		&\mean{\abs{\mathtt{UI}_{kj}}^2} = \mean{\abs{\vh_k^\H {\vf_{\mathtt{MRT}}}_j}^2}\\
		&= \mean{\abs{\sqrt{\gamma_j} \hat{\vh}_k^\H \hat{\vh}_j + \sqrt{\eta_j}  \hat{\vh}_k^\H {\vv}  + \ve_k^\H (\sqrt{\gamma_j} \hat{\vh}_j + \sqrt{\eta_j} {\vv})}^2} \\
		&= \Nt \beta_k (\gamma_j \xi_j + \eta_j). \nbthis \label{eq_mean_abs_UIk2_MRT}
	\end{align*}
	From \eqref{eq_abs_DSk2_MRT}, \eqref{eq_mean_abs_BUk2_MRT1}, and \eqref{eq_mean_abs_UIk2_MRT}, the SINR term in \eqref{eq_SE_def} for MRT can be obtained as
	\begin{align*}
		{\mathtt{SINR}_{\mathtt{MRT}}}_k = \frac{\Nt^2 \xi_k^2 \gamma_k }{\Nt \beta_k \rho + \Nt {\bm{\zeta}_{\mathtt{MRT}}^\T}_k \bm{\gamma} + \sigmac},
	\end{align*}
	where $\rho$ and ${\bm{\zeta}_{\mathtt{MRT}}}_k$ are defined in Theorem \ref{theo_SE}. As a result, we obtain \eqref{eq_SE_theo} with $\lambdabf_k = {\lambda_{\mathtt{MRT}}}_k$ and $\bm{\zeta}_{\mathtt{bf},k} = {\bm{\zeta}_{\mathtt{MRT}}}_k$.
	
	\subsection{ZF Beamforming}
	\subsubsection{Compute $\abs{\mathtt{DS}_k}^2$}
	Based on \eqref{eq_hkfmrti} and \eqref{eq_mean_0}, we have
	\begin{align*}
		&\abs{\mathtt{DS}_k}^2 = \abs{\mean{\vh_k^\H {\vf_{\mathtt{ZF}}}_k}}^2\\
		&= \abs{\mean{\left(\hat{\vh}_k + \ve_k\right)^\H \left(\sqrt{\gamma_k} \check{\vh}_k + \sqrt{\eta_k} {\vv}\right)}}^2 
		= \gamma_k \abs{\mean{\hat{\vh}_k^\H \breve{\vh}_k}}^2 \\
		&=  \gamma_k \abs{\mean{\left[\hat{\mH}^\H \hat{\mH} (\hat{\mH}^\H \hat{\mH})^{-1}\right]_{kk}}}^2 = \gamma_k, \nbthis \label{eq_abs_DSk2}
	\end{align*}
	where $[\cdot]_{ij}$ denotes the $(i,j)$-th entry of a matrix.
 \smallskip
	\subsubsection{Compute $\meanshort{\abs{\mathtt{BU}_k}^2}$}
	We have
	\begin{align*}
		\mean{\abs{\mathtt{BU}_k}^2} &= \mean{\abs{\vh_k^\H {\vf_{\mathtt{ZF}}}_k - \mean{ \vh_k^\H {\vf_{\mathtt{ZF}}}_k}}^2}\\
		&= \underbrace{\mean{\abs{\vh_k^\H {\vf_{\mathtt{ZF}}}_k}^2}}_{\triangleq E_0} - \underbrace{\abs{\mean{ \vh_k^\H {\vf_{\mathtt{ZF}}}_k}}^2}_{= \abs{\mathtt{DS}_k}^2}.\nbthis \label{eq_mean_abs_BUk2}
	\end{align*}
	Here, based on \eqref{eq_hkfmrti}, $E_0$ can be computed as:
	\begin{align*}
		E_0 &= \mean{\abs{\sqrt{\gamma_k} \hat{\vh}_k^\H \breve{\vh}_k + \sqrt{\eta_k}  \hat{\vh}_k^\H {\vv}  + \ve_k^\H (\sqrt{\gamma_k} \breve{\vh}_k + \sqrt{\eta_k} {\vv})}^2}\\
		&= \gamma_k \mean{\absshort{\hat{\vh}_k^\H \breve{\vh}_k}^2} + \eta_k \mean{\absshort{\hat{\vh}_k^\H {\vv}}^2} \\
		&\quad + \sqrt{\gamma_k \eta_k} \mean{\hat{\vh}_k^\H \breve{\vh}_k \hat{\vh}_k^\H {\vv}}\\
		&\quad + \eta_k \mean{\abs{\ve_k^\H {\vv}}^2} + \gamma_k \mean{\abs{\ve_k^\H \breve{\vh}_k}^2}, \nbthis \label{eq_E001}
	\end{align*}
	where
	\begin{align*}
		&\mean{\absshort{\hat{\vh}_k^\H \breve{\vh}_k}^2} = \mean{\Big[ \hat{\mH}^\H \hat{\mH} (\hat{\mH}^\H \hat{\mH})^{-1} \Big]_{kk}} = 1, \nbthis \label{eq_mean_1}\\
		&\mean{\absshort{\hat{\vh}_k^\H {\vv}}^2} = \mean{\normshort{\hat{\vh}_k}^2} = \Nt \xi_k, \nbthis \label{eq_mean_2}\\
		&\mean{\hat{\vh}_k^\H \breve{\vh}_k \hat{\vh}_k^\H {\vv}} = \mean{\Big[ \hat{\mH}^\H \hat{\mH} (\hat{\mH}^\H \hat{\mH})^{-1} \Big]_{kk} \hat{\vh}_k^\H {\vv}} = \bm{0}, \nbthis \label{eq_mean_3} \\
		&\mean{\abs{\ve_k^\H {\vv}}^2} = \mean{\norm{\ve_k}^2} = \Nt \epsilon_k. \nbthis \label{eq_mean_4}
	\end{align*}
	Here, \eqref{eq_mean_2}--\eqref{eq_mean_4} follow from the fact that all the elements of ${\vv}$ are deterministic with unit modulus and $\hat{\vh}_k \sim \mathcal{CN}(0, \xi_k \mI_{\Nt})$. 
	To compute $\meanshort{\absshort{\ve_k^\H \breve{\vh}_k}^2}$, we write $\breve{\vh}_k = \hat{\mH} (\hat{\mH}^\H \hat{\mH})^{-1} \vi_k$, where $\vi_k$ is the $k$th column of $\mI_K$. As a result, we have
	\begin{align*}
		&\mean{\absshort{\ve_k^\H \breve{\vh}_k}^2} = \epsilon_k \mean{\normshort{\breve{\vh}_k}^2} = \epsilon_k \mean{\tr{\breve{\vh}_k \breve{\vh}_k^\H}} \\
		&= \epsilon_k \mean{\tr{(\hat{\mH}^\H \hat{\mH})^{-1} \vi_k \vi_k^\H  (\hat{\mH}^\H \hat{\mH})^{-1} \hat{\mH}^\H \hat{\mH} }}\\
		&= \epsilon_k \mean{\Big[(\hat{\mH}^\H \hat{\mH})^{-1}\Big]_{kk}} = \frac{\epsilon_k}{(\Nt-K)\xi_k}, \nbthis \label{eq_mean_norm_h_ZF}
	\end{align*}
	where the last equality follows from Proposition 3 in~\cite{ngo2013energy}.
	From \eqref{eq_E0001}, \eqref{eq_mean_abs_BUk2}, and \eqref{eq_mean_1}--\eqref{eq_mean_4}, and noting that $\xi_k + \epsilon_k = \beta_k$, we obtain
	\begin{align*}
		\mean{\abs{\mathtt{BU}_k}^2} 
		= \Nt \beta_k \eta_k  + \frac{\epsilon_k \gamma_k}{(\Nt-K) \xi_k}. \nbthis \label{eq_mean_abs_BUk2_1}
	\end{align*}

	\subsubsection{Compute $\meanshort{\abs{\mathtt{UI}_{kj}}^2}$}
	Similar to \eqref{eq_E001}, we can write
	\begin{align*}
		&\mean{\abs{\mathtt{UI}_{kj}}^2} = \gamma_j \mean{\absshort{\hat{\vh}_k^\H \breve{\vh}_j}^2} + \eta_j \mean{\absshort{\hat{\vh}_k^\H {\vv}}^2} \\
		&\qquad + \eta_j \mean{\absshort{\ve_k^\H {\vv}}^2} + \gamma_j \mean{\absshort{\ve_k^\H \breve{\vh}_j}^2}.
	\end{align*}
    By using $\meanshort{\absshort{\hat{\vh}_k^\H \breve{\vh}_j}^2} = \meanshort{[ \hat{\mH}^\H \hat{\mH} (\hat{\mH}^\H \hat{\mH})^{-1} ]_{kj}} = 0$ and the results in \eqref{eq_mean_1}--\eqref{eq_mean_4}, we obtain
    \begin{align*}
    	\mean{\abs{\mathtt{UI}_{kj}}^2} 
    	&= \Nt \beta_k \eta_j + \frac{\epsilon_k \gamma_j }{(\Nt-K) \xi_j}. \nbthis \label{eq_mean_abs_UIk2}
    \end{align*}
    From \eqref{eq_mean_abs_BUk2_1} and \eqref{eq_mean_abs_UIk2}, we have
    \begin{align*}
    	\mean{\abs{\mathtt{BU}_k}^2} +  \sum\nolimits_{j \neq k} \mean{\abs{\mathtt{UI}_{kj}}^2}
    	&= \Nt \beta_k \rho + \Nt {\bm{\zeta}_{\mathtt{ZF}}^\T}_k \bm{\gamma},
    \end{align*}
    where ${\bm{\zeta}_{\mathtt{ZF}}}_k$ is defined in Theorem \ref{theo_SE}. Then, we obtain
    \begin{align*}
    	{\mathtt{SINR}_{\mathtt{ZF}}}_k = \frac{\gamma_k}{\Nt \beta_k \rho +  \Nt {\bm{\zeta}_{\mathtt{ZF}}^\T}_k \bm{\gamma} + \sigmac},
    \end{align*}
    leading to \eqref{eq_SE_theo} with $\lambdabf_k = {\lambda_{\mathtt{ZF}}}_k$ and $\zetabf_k = {\bm{\zeta}_{\mathtt{ZF}}}_k$.

	\section{Proof of Theorem \ref{theo_CRB}}
	\label{appd_CRB}
	{The equalities in \eqref{eq_CRB_theta_0} and \eqref{eq_CRB_phi_0} follow directly from the fact that the CRLBs for $\theta$ and $\phi$ correspond to the first and second diagonal entries of $\mJ_{\theta, \phi}$ in \eqref{eq_FIM}, i.e., $\widetilde{\mathtt{CRLB}}_{\mathtt{bf},\theta}(\bm{\gamma},\rho) = [\mJ_{\theta, \phi}^{-1}]_{11}$ and $\widetilde{\mathtt{CRLB}}_{\mathtt{bf},\phi}(\bm{\gamma},\rho) =  [\mJ_{\theta, \phi}^{-1}]_{22}$, where, $[\mA]_{ij}$ denotes the $(i,j)$-th entry of matrix $\mA$. In the following, we prove the closed form expressions for $\Jtt$, $\Jpp$, $\Jtp$, $\Jaa$, and $\Jpa$ given in \eqref{eq_Jtt}--\eqref{eq_Jpa}, respectively.}

    We first derive common terms for the CRLBs for both the MRT and ZF cases. Specifically, based on \eqref{eq_at}--\eqref{eq_atv} and using the property $\frac{\partial (\va \otimes \vb)}{\partial o} = \frac{\partial \va}{\partial o}  \otimes \vb + \va  \otimes \frac{\partial \vb}{\partial o}$, we have
	\begin{align*}
		\adottheta &= \dot{\va}_{\mathtt{h}\theta} \otimes  \av + \ah \otimes \dot{\va}_{\mathtt{v}\theta} = \dot{\va}_{\mathtt{h}\theta} \otimes  \av, \nbthis \label{eq_adot_theta} \\
		\adotphi &= \dot{\va}_{\mathtt{h}\phi} \otimes  \av + \ah \otimes \dot{\va}_{\mathtt{v}\phi}, \nbthis \label{eq_adot_phi}
	\end{align*}
	where $\dot{\va}_{\mathtt{v}\theta} = 0$ as it is independent of $\theta$, and
	\begin{align*}
		\dot{\va}_{\mathtt{h}\theta} &= j\pi \cos(\theta) \sin(\phi)  \vu_{\mathtt{th}} \circ \ah, \nbthis \label{eq_ahdot_theta} \\
		\dot{\va}_{\mathtt{h}\phi} &= j\pi \sin(\theta) \cos(\phi) \vu_{\mathtt{th}} \circ \ah, \nbthis \label{eq_ahdot_phi}\\
		\dot{\va}_{\mathtt{v}\phi} &= -j\pi \sin(\phi) \vu_{\mathtt{tv}} \circ \av. \nbthis \label{eq_avdot_phi}
	\end{align*}
	Here, $\vu_{\mathtt{x}} \triangleq \left[-(N_\mathtt{x}-1)/2, \ldots, 0, \ldots, (N_\mathtt{x}-1)/2 \right]^\T$, $\mathtt{x} \in \{\mathtt{th},\mathtt{tv},\mathtt{rh},\mathtt{rv}\}$, and
	\begin{align*}
		\norm{\vu_{\mathtt{x}}}^2 &= 2\left( \frac{(N_{\mathtt{x}}-1)^2}{4} + \ldots + 1 \right) = \frac{N_{\mathtt{x}}(N_{\mathtt{x}}^2-1)}{12}. \nbthis \label{eq_norm_ux}
	\end{align*}
	Note that $\ah^\H \dot{\va}_{\mathtt{h}\theta} = \ah^\H \dot{\va}_{\mathtt{h}\phi} = 0$, and since $\left(\vz_1 \otimes \vz_2 \right)^\H = \vz_1^\H \otimes \vz_2^\H$ and $\left(\vz_1 \otimes \vz_2 \right)\left(\vz_3 \otimes \vz_4 \right) = \left(\vz_1 \vz_3 \right) \otimes \left(\vz_2 \vz_4 \right)$, we have
	\begin{align*}
		\va^\H  \adottheta &= \left(\ah \otimes \av\right)^\H \left(\dot{\va}_{\mathtt{h}\theta} \otimes  \av\right)
		= \left(\ah^\H \dot{\va}_{\mathtt{h}\theta} \otimes \av^\H \av\right) = 0. \nbthis \label{eq_aadotheta}\\
		\va^\H  \adotphi &= \left(\ah \otimes \av\right)^\H \left(\dot{\va}_{\mathtt{h}\phi} \otimes  \av + \ah \otimes \dot{\va}_{\mathtt{v}\phi}\right)\\ 
		&= \left(\ah^\H \dot{\va}_{\mathtt{h}\phi} \otimes \av^\H \av\right) + \left(\ah^\H \ah \otimes \av^\H \dot{\va}_{\mathtt{v}\phi} \right) = 0. \nbthis \label{eq_aadotphi}
	\end{align*}
	Similarly, we can show that
	\begin{align*}
		\vb^\H  \bdottheta = 0,\
		\vb^\H  \bdotphi = 0. \nbthis \label{eq_bbdot}
	\end{align*}

    Next, we derive matrix $\mR_{\mathtt{bf}}$ in \eqref{eq_approx_cov}. For the MRT precoder, from \eqref{eq_F_MRT} and \eqref{eq_approx_cov}, we have
	\begin{align*}
		&\mR_{\mathtt{MRT}} 
		= \mean{\hat{\mH} \bm{\Gamma}^2 \hat{\mH}^\H + \rho {\vv}  {\vv}^\H + {\vv} \bar{\bm{\eta}}^\T \bm{\Gamma}^\H \hat{\mH}^\H + \hat{\mH} \bm{\Gamma} \bar{\bm{\eta}} {\vv}^\H} \\
		&\quad= \sum_{k=1}^K \gamma_k \mean{\hat{\vh}_k \hat{\vh}_k^\H} + \rho {\vv}  {\vv}^\H = \bm{\xi}_{\mathtt{MRT}}^\T \bm{\gamma} \mI_{\Nt} + \rho {\vv}  {\vv}^\H.\nbthis \label{eq_Rx_mrt}
	\end{align*}
    Similarly, for the ZF precoder, we have
	\begin{align*}
		\mR_{\mathtt{ZF}} 
		&= \underbrace{\mean{\hat{\mH} (\hat{\mH}^\H \hat{\mH})^{-1} \bm{\Gamma}^2 (\hat{\mH} (\hat{\mH}^\H \hat{\mH})^{-1})^\H}}_{\triangleq \mA} + \rho {\vv}  {\vv}^\H.
	\end{align*}
	To compute $\mA$, recall that we can express $\hat{\mH}$ as $\hat{\mH} = \mZ \bm{\Xi}^{\frac{1}{2}}$, where $\mZ = [\vz_1, \ldots, \vz_k]$, $\vz_k \sim \mathcal{CN}(\bm{0}, \mI_{\Nt})$, and $\bm{\Xi} = \diag{\xi_1, \ldots, \xi_K}$. Then, we can write
	\begin{align*}
		\mA &= \mean{\mZ \bm{\Xi}^{\frac{1}{2}} ( \bm{\Xi}^{\frac{1}{2}} \mZ^\H \mZ \bm{\Xi}^{\frac{1}{2}})^{-1} \bm{\Gamma}^2 ( \bm{\Xi}^{\frac{1}{2}} \mZ^\H \mZ \bm{\Xi}^{\frac{1}{2}})^{-1} \bm{\Xi}^{\frac{1}{2}} \mZ^\H} \\
		&= \mean{\mZ (\mZ^\H \mZ)^{-1} \bm{\Xi}^{-\frac{1}{2}} \bm{\Gamma}^2 \bm{\Xi}^{-\frac{1}{2}} (\mZ^\H \mZ)^{-1} \mZ^\H}. \nbthis \label{eq_e2_1}
	\end{align*}
	For any  $\Nt \times \Nt$ unitary matrix $\mU$, we have
	\begin{align*}
		\mU \mA \mU^\H = \mean{\mU \mZ (\mZ^\H \mZ)^{-1} \bm{\Xi}^{-\frac{1}{2}} \bm{\Gamma}^2 \bm{\Xi}^{-\frac{1}{2}} (\mZ^\H \mZ)^{-1} (\mU \mZ)^\H}.
	\end{align*}
	Let $\bar{\mZ} = \mU \mZ$. Then, $\bar{\mZ}$ is statistically identical to $\mZ$ and $\mZ^\H \mZ = \mZ^\H \mU^\H \mU \mZ = \bar{\mZ}^\H \bar{\mZ}$. Thus, 
	\begin{align*}
		\mU \mA \mU^\H = \mean{\bar{\mZ} (\bar{\mZ}^\H \bar{\mZ})^{-1} \bm{\Xi}^{-\frac{1}{2}} \bm{\Gamma}^2 \bm{\Xi}^{-\frac{1}{2}} (\bar{\mZ}^\H \bar{\mZ})^{-1} \bar{\mZ}^\H} = \mA.
	\end{align*}
	Let $\mA = \mV \bm{\Lambda} \mV^\H$ be the eigenvalue decomposition of $\mA$. Then $\mU \mV \bm{\Lambda} \mV^\H \mU^\H = \mA$, implying that any set of mutually orthogonal vectors constitutes valid eigenvectors for $\mA$. Thus, $\mA$ must be a scaled identity, i.e., $\mA = c \mI_{\Nt}$, where $c = \frac{1}{\Nt(\Nt-K)} \sum_{k=1}^K \frac{\gamma_k}{\xi_k} = \bm{\xi}_{\mathtt{ZF}}^\T \bm{\gamma}$, with $\bm{\xi}_{\mathtt{ZF}}$ defined in Lemma \ref{lemma_power}. As a result, we can write
    \begin{align*}
        \mR_{\mathtt{ZF}} = \bm{\xi}_{\mathtt{ZF}}^\T \bm{\gamma} \mI_{\Nt}  + \rho {\vv}  {\vv}^\H. \nbthis \label{eq_Rx_ZF}
    \end{align*} 
    
    It is observed from \eqref{eq_Rx_mrt} and \eqref{eq_Rx_ZF} that $\mR_{\mathtt{ZF}}$ and $\mR_{\mathtt{MRT}}$ have a similar structure: $\mR_{\mathtt{bf}} = \bm{\xi}_{\mathtt{bf}}^\T \bm{\gamma} \mI_{\Nt}  + \rho {\vv}  {\vv}^\H$, with $\mathtt{bf} \in \{\mathtt{MRT}, \mathtt{ZF}\}$. Therefore, $\Jtt, \Jpp, \Jtp, \Jaa,$ and $\Jpat$ for both precoders can be derived in a similar manner and have similar forms. Specifically, from \eqref{eq_def_Jtt}, \eqref{eq_dot_G}, \eqref{eq_aadotheta}--\eqref{eq_Rx_mrt}, \eqref{eq_Rx_ZF}, and noting that $\norm{\va}^2 = \Nt$ and $\norm{\vb}^2 = \Nr$, we have
	\begin{align*}
		\Jtt 
		&= \kappa \abs{\alpha}^2 \mathtt{trace} \left\{ \left(\bdottheta \va^\H + \vb \adottheta^\H\right)   \left(\bm{\xi}_{\mathtt{bf}}^\T \bm{\gamma} \mI_{\Nt} + \rho {\vv}  {\vv}^\H\right) \right.\\
		&\hspace{4.5cm} \times \left. \left(\va \bdottheta^\H + \adottheta \vb^\H\right) \right\} \\
		&= \kappa \abs{\alpha}^2 \Big(\bm{\xi}_{\mathtt{bf}}^\T \bm{\gamma} \left( \Nr \norm{\adottheta}^2 + \Nt \normshort{\bdottheta}^2\right)\\
        &\hspace{1.5cm} + \rho {\left( {\norm{\vv^\H \va}^2} \normshort{\bdottheta}^2 + \Nr {\norm{\vv^\H \adottheta}^2} \right)} \Big). \nbthis \label{eq_Jtt_1}
	\end{align*}
        By a similar manner, we obtain $\Jpp, \Jtp, \Jaa,$ and $\Jpat$ as shown in Theorem \ref{theo_CRB}.

    \section{Proof of Remark \ref{rm_CRB_0}}
    \label{appd_proof_of_remark_CRB}

    With $\vv = \va$, we obtain $\norm{\vv^\H \va} = \norm{\va}^2 = \Nt$ and $\norm{\vv^\H \adottheta} = \norm{\vv^\H \adotphi} = 0$, based on \eqref{eq_aadotheta} and \eqref{eq_aadotphi}. Substituting these into \eqref{eq_Jtt}--\eqref{eq_Jpa}, we obtain:
	\begin{align*}
		&\Jtt 
		= \kappa \abs{\alpha}^2 \left(\bm{\xi}_{\mathtt{bf}}^\T \bm{\gamma} \left( \Nr \norm{\adottheta}^2 + \Nt \normshort{\bdottheta}^2\right) + \rho \Nt^2 \normshort{\bdottheta}^2\right),\\
		&\Jpp\! =\! \kappa\! \abs{\alpha}^2 \left( \bm{\xi}_{\mathtt{bf}}^\T \bm{\gamma} \left(\Nr \norm{\adotphi}^2\! +\! \Nt \normshort{\bdotphi}^2\right) \!+ \rho \Nt^2 \normshort{\bdotphi}^2\!\right)\!,\\
		&\Jtp\! =\! \kappa \abs{\alpha}^2\!\! \left(\!\bm{\xi}_{\mathtt{bf}}^\T \bm{\gamma} \!\!\left( \Nr \adottheta^\H \adotphi  + \Nt \bdottheta^\H \bdotphi\right)\! +\! \rho \Nt^2 \bdottheta^\H \bdotphi\right)\!,\\
		&\Jaa = \kappa \left( \bm{\xi}_{\mathtt{bf}}^\T \bm{\gamma} \Nt \Nr + \rho \Nt^2 \Nr \right)  \mI_2,\\
		&\Jpa = \bm{0},\ \Jpat = \Jpp.
	\end{align*}
    Substituting these results into \eqref{eq_CRB_theta_0} and \eqref{eq_CRB_phi_0}, we obtain \eqref{eq_CRB_theta_1} and \eqref{eq_CRB_phi_1}, respectively.
    
	The results in \eqref{eq_norm_adot_theta}--\eqref{eq_bdot_theta_bdot_phi} can be obtained by using the property $\ah^\H \dot{\va}_{\mathtt{h}\theta} = \ah^\H \dot{\va}_{\mathtt{h}\phi} = \ah^\H \dot{\va}_{\mathtt{v}\theta} = \ah^\H \dot{\va}_{\mathtt{v}\phi} = 0$ and the results in \eqref{eq_adot_theta}--\eqref{eq_norm_ux}. For example, $\adottheta^\H \adotphi$ can be computed as
	\begin{align*}
		\adottheta^\H \adotphi &= \left(\dot{\va}_{\mathtt{h}\theta} \otimes  \av\right)^\H \left(\dot{\va}_{\mathtt{h}\phi} \otimes  \av + \ah \otimes \dot{\va}_{\mathtt{v}\phi}\right) \nbthis \label{atheta_adotphi_0} \\
		&= \dot{\va}_{\mathtt{h}\theta}^\H \dot{\va}_{\mathtt{h}\phi} \otimes \av^\H \av + \dot{\va}_{\mathtt{h}\theta}^\H \ah \otimes \av^\H \dot{\va}_{\mathtt{v}\phi} \nbthis \label{atheta_adotphi_1} \\
		&= \norm{\av}^2 \left(\dot{\va}_{\mathtt{h}\theta}^\H \dot{\va}_{\mathtt{h}\phi}\right) \nbthis \label{atheta_adotphi_2}   \\
		&= \Ntv \pi^2 \cos(\theta) \sin(\phi) \sin(\theta) \cos(\phi) \norm{\vu_{\mathtt{th}}}^2 \nbthis \label{atheta_adotphi_3} \\
        &= \frac{\Nt(\Nth^2-1)}{12} \pi^2 \sin(\phi) \sin(\theta) \cos(\theta) \cos(\phi), \nbthis \label{atheta_adotphi_4} 
	\end{align*}
    where the second term in \eqref{atheta_adotphi_1} is equal to zero, \eqref{atheta_adotphi_3} follows from \eqref{eq_ahdot_theta} and \eqref{eq_ahdot_phi}, and \eqref{atheta_adotphi_4} follows from \eqref{eq_norm_ux} and $\Nth \Ntv = \Nt$. The other results in \eqref{eq_norm_adot_theta}--\eqref{eq_bdot_theta_bdot_phi} are obtained similarly.

    \section{Proof of Remark \ref{rm_CRB_1}}
    \label{appd_CRB_go_to_zero}
    With $\Nth = \Ntv = \sqrt{\Nt}$ and $\Nt \gg 1$, we can make the following approximations: $\norm{\adottheta}^2 \approx \Nt^2 c_1$, $\norm{\adotphi}^2 \approx \Nt^2 c_2$, and $\adottheta^\H \adotphi \approx  \Nt^2 c_3$. Here, $c_1 \triangleq \frac{1}{12}  \pi^2 \cos^2(\theta) \sin^2(\phi)$, $c_2 \triangleq \frac{1}{12} \pi^2 \cos^2(\phi)  \left(\sin^2(\theta)  + 1\right)$, and $c_3 \triangleq \frac{1}{12} \pi^2 \sin(\phi) \sin(\theta) \cos(\phi) \cos(\theta)$. Then, the CRLBs in \eqref{eq_CRB_theta_1} and \eqref{eq_CRB_phi_1} can be approximated by $\overline{\mathtt{CRLB}}_{\mathtt{bf},\theta} \approx \frac{1}{\kappa \abs{\alpha}^2 \Nt^2 } f_{\theta}^{-1}$ and $\overline{\mathtt{CRLB}}_{\mathtt{bf},\phi} \approx \frac{1}{\kappa \abs{\alpha}^2 \Nt^2} f_{\phi}^{-1}$, respectively, where 
   \begin{align*}
        f_{\theta} &= \xibf^\T \bm{\gamma} \Nr  c_1 + \rho  \normshort{\bdottheta}^2 - \frac{\left(\xibf^\T \bm{\gamma} \Nr  c_3 + \rho  \bdottheta^\H \bdotphi\right)^2}{ \xibf^\T \bm{\gamma} \Nr  c_2 + \rho  \normshort{\bdotphi}^2},\\
        f_{\phi} &= \xibf^\T \bm{\gamma} \Nr  c_2 + \rho  \normshort{\bdotphi}^2 - \frac{\left(\xibf^\T \bm{\gamma} \Nr  c_3  + \rho  \bdottheta^\H \bdotphi\right)^2}{\xibf^\T \bm{\gamma} \Nr  c_1 + \rho  \normshort{\bdottheta}^2}.
    \end{align*}
       
            Furthermore, with the equal power allocation in \eqref{eq_equal_power_solution}, we have $\xibf^\T \bm{\gamma} = \frac{P_{\mathtt{bf}}}{2\Nt} = \rho$ for both MRT and ZF precoding. Consequently, the CRLBs can be simplified as
            \begin{align*}
                \overline{\mathtt{CRLB}}_{\theta}\! &=\! 
                \frac{2}{ \kappa \abs{\alpha}^2\! \Nt P_{\mathtt{bf}} } \!\left(\!\Nr  c_1 \!+\!  \normshort{\bdottheta}^2 - \frac{\left(\!\Nr  c_3 +  \bdottheta^\H \bdotphi\right)^2}{ \Nr  c_2 +  \normshort{\bdotphi}^2}\!\right)^{-1}\!\!\!, \\
                \overline{\mathtt{CRLB}}_{\phi}\! &=\! 
                \frac{2 }{ \kappa \abs{\alpha}^2\! \Nt P_{\mathtt{bf}}}  \!\left(\!\Nr  c_2 \!+\!  \normshort{\bdotphi}^2 - \frac{\left(\Nr  c_3  +  \bdottheta^\H \bdotphi\right)^2}{\Nr  c_1 +  \normshort{\bdottheta}^2}\!\right)^{-1}\!\!\!.
         \end{align*}
          Note that $c_1$, $c_2$, $c_3$, $\normshort{\bdottheta}^2$, $\normshort{\bdotphi}^2$, and $\bdottheta^\H \bdotphi$ do not depend on $\Nt$. Thus, we can conclude that $\overline{\mathtt{CRLB}}_{\theta} \rightarrow 0$ and $\overline{\mathtt{CRLB}}_{\phi} \rightarrow 0$ when $\Nt \rightarrow \infty$.

\bibliographystyle{IEEEtran}
\bibliography{IEEEabrv,Bibliography}

 \end{document}